\def\snr{\textrm{SNR}}
\def\dint{{\rm \ d}}
\newtheoremstyle{slplain}
  {3pt}
  {3pt}
  {\slshape}
  {}
  {\bfseries}
  {.}%
  { }
  {}
\theoremstyle{slplain}
\newtheorem{cor}{Corollary}
\newtheorem{lem}{Lemma}
\newtheorem{pro}{Proposition}
\begin{document}

\title{{ \small \copyright 20xx IEEE. Personal use of this material is permitted. Permission from IEEE must be obtained for all other uses, in any current or future media, including reprinting/republishing this material for advertising or promotional purposes, creating new collective works, for resale or redistribution to servers or lists, or reuse of any copyrighted component of this work in other works.} \\ {\LARGE An Optimal Stopping Approach to Cell Selection in 5G Networks}}

\author{
\IEEEauthorblockA{Ning Wei, Xingqin Lin, Guangrong Yue, and Zhongpei Zhang} 
\thanks{ 

Ning Wei, Guangrong Yue, and Zhongpei Zhang are with the National Key Laboratory of Science and Technology on Communications,
University of Electronic Science and Technology of China, Sichuan, China. (Email: wn, yuegr, zhangzp@uestc.edu.cn.) 

Xingqin Lin is with Ericsson Research, Santa Clara, CA, USA. (Email: xingqin.lin@ericsson.com.) 

This work was supported by National Natural Science Foundation of China (Grant No.'s 61871070, 61831004).

An earlier version of this work has been uploaded to arXiv: https://arxiv.org/abs/1811.07452.


}
}

\maketitle

\begin{abstract}
Initial cell search and selection is one of the first few essential steps that a mobile device must perform to access a mobile network. The distinct features of 5G bring new challenges to the design of initial cell search and selection. In this paper, we propose a load-aware initial cell search and selection scheme for 5G networks. The proposed scheme augments the existing pure received power based scheme by incorporating a new load factor broadcast as part of system information. We then formulate a throughput optimization problem using the optimal stopping theory. We characterize the throughput optimal stopping strategy and the attained maximum throughput. The results show that the proposed cell search and selection scheme is throughput optimal with a carefully optimized connection threshold. 
\end{abstract}



\section{Introduction}

In June 2018, the third-generation partnership project (3GPP) approved the technical specifications for the standalone version of the 5th generation (5G) wireless access technology, known as new radio (NR) \cite{lin20185g}. Meanwhile, 3GPP continues evolving long-term evolution (LTE) to meet 5G requirements. In this paper, we study initial cell search and selection in 5G networks to achieve better load balancing and optimized throughput performance.

In cellular networks, a user equipment (UE) needs to perform essential cell search and selection prior to data communication. Cell search usually involves two steps: 1) search and acquire synchronization to a cell, and 2) decode system information that contains essential information for accessing the cell \cite{dahlman20134g}. In initial cell search, the UE shall search for the strongest cell  \cite{3gpp36304}. Once the UE finds an appropriate cell, it can select the cell and proceed to establish a connection with the cell by performing random access \cite{lin2016random}.

Although the received power based initial cell selection works well in single-tier homogeneous networks, it may result in unbalanced loads across the cells in multi-tier heterogeneous networks where the base stations (BSs) of small cells have much lower transmit powers than the BSs of macro cells \cite{andrews2014overview, bhushan2014network}. Further, millimeter wave (mmWave)  is a distinct feature of 5G networks \cite{lin20185g, andrews2014will}. MmWave communication utilizes the large chunks of spectrum resources in the mmWave bands (30-300 GHz) to achieve multi-Gbps data rates \cite{rappaport2013millimeter}.  The load imbalance issue may become more serious in 5G networks where small cells would likely operate on the mmWave spectrum. If there were few connections in the mmWave small cells, the abundant radio resources brought by mmWave spectrum would not be sufficiently utilized. 

Cell selection in 5G networks has drawn much interest in the past few years. We overview a few exemplary works related to our discussions herein. The work \cite{karimi2018centralized} proposed a centralized joint cell selection and scheduling policy for ultra-reliable low latency communication (URLLC) in the context of 5G NR networks. In \cite{elkourdi2018towards}, the authors also aimed to optimize latency in 5G heterogeneous networks by proposing a Bayesian cell selection algorithm that incorporates access nodes capabilities and UE traffic type. The cell selection in 5G heterogeneous networks was analyzed in \cite{gharam2018cell} using a non-cooperative game-theoretic framework with a mixed strategy Nash equilibrium method. The authors in \cite{sherin2018cell} also studied cell association in a two-tier 5G heterogeneous network by using an evolutionary game approach. In the context of LTE advanced heterogeneous networks, the work \cite{lai2018cell} addressed the cell selection and resource allocation optimization problem by dynamic programming. The work \cite{chang2018adaptive} considered 5G cooperative cellular communication and proposed to use adaptive heading prediction of moving path for user-centric cell selection. In \cite{ba2019load}, the authors proposed  a dynamic serving cell selection scheme based on channel state information and cell load for multi-connectivity in a 5G ultra dense network. In this paper, we adopt an optimal stopping approach to cell selection in 5G networks. To the best of our knowledge, optimal stopping theory has not previously been used to tackle cell selection in 5G networks.

Optimal stopping theory is about determining a time to take a given action based on causal observations to maximize an expected reward \cite{peskir2006optimal, ferguson2012optimal}. The application of optimal stopping theory to wireless communications and networking can be found in a diverse set of problems \cite{chaporkar2008optimal, zheng2009distributed, jiang2009optimal, zhang2010multi, cheng2011simple, poulakis2013channel, yang2014dynamic, wei2015optimal, li2015distributed, cacciapuoti2016optimal}. For example, the early work \cite{chaporkar2008optimal} applies optimal stopping theory to study joint probing and transmission strategies to maximize the system throughput in a broadcast fading channel. More recently, optimal stopping theory has been used to study emerging problems in the areas such as mmWave cellular systems \cite{wei2015optimal}, energy harvesting based wireless networks \cite{li2015distributed}, coexistence of heterogeneous networks in TV white space \cite{cacciapuoti2016optimal}, and cellular in unlicensed spectrum \cite{wei2017throughput}. 

In this paper, we propose a load-aware initial cell search and selection scheme to achieve a more balanced load distribution in 5G networks. In the proposed scheme, cell load information is included in the broadcast system information to facilitate initial cell search and selection. The proposed scheme augments the existing pure received power based initial cell search and selection scheme with a new load factor broadcast as part of the system information. This equips the networks with a powerful access control method that facilitates load balancing. 

We then study the theoretical performance of the proposed initial cell search and selection scheme. We formulate a throughput optimization problem. Using the optimal stopping theory \cite{peskir2006optimal, ferguson2012optimal}, we cast the problem as a maximal rate of return problem. We characterize the throughput optimal stopping strategy and the attained maximum throughput. The results show that the proposed initial cell search and selection scheme is throughput optimal with a carefully optimized connection threshold. The optimal connection threshold and maximum throughput can be found by solving a fixed point equation. The fixed point equation in general does not admit a closed-form solution. We further provide an alternative characterization of the optimal throughput and use it to develop an iterative algorithm to compute the solution to the fixed point equation. We also prove the convergence of the proposed iterative algorithm.

This rest of this paper is organized as follows. Section \ref{sec:proposed} describes the proposed initial cell search and selection scheme. Section \ref{sec:model} introduces the system model and mathematically formulates the problem. Section \ref{sec:optimal} analyzes the problem in detail using optimal stopping theory. Section \ref{sec:sim} provides simulation results to demonstrate the analytical results and obtain insights. Finally, Section \ref{sec:conclusion} presents our conclusions.

\section{Proposed Initial Cell Search and Selection}
\label{sec:proposed}

To achieve a more balanced load distribution in 5G networks, we propose that some cell load information is included in the broadcast system information to facilitate initial cell search and selection. Such load information may take various forms such as the number of connections in the cell, the available bandwidth, or the expected scheduling probability. To reduce access latency, the load information shall be broadcast in the first system information block. Such system information block usually  consists of a limited number of important bits that shall be read by all the UEs before accessing the cell.

For concreteness, we assume that each cell $i$ broadcasts the expected scheduling probability $\beta_i \in [0, 1]$ to control UE access. In other words, $\beta_i$ would be treated by UE in the initial cell search and selection as the expected scheduling probability if the UE camps on the cell. The proposed expected scheduling probability may be considered as ``soft'' access barring. Access barring has been used in cellular networks to reduce the access load in case of an overload situation \cite{3gpp36331}. Compared to the existing access barring that takes a boolean value (i.e., access allowed or denied), the proposed expected scheduling probability takes value in $[0, 1]$ and thus is a soft access barring scheme.

We are now in a position to describe the proposed initial cell search and selection scheme. When examining a cell $i$, UE first obtains synchronization and measures the corresponding received signal-to-noise ratio (SNR) from the BS. If the measured received SNR from BS $i$ (denoted as $\snr_i$) is below some threshold $\Gamma$, UE stops examining the current cell $i$ and starts to examine another cell. Here, the threshold $\Gamma$ is introduced to avoid selecting a cell with a too low SNR that may lead to poor link quality. If $\snr_i \geq \Gamma$, UE proceeds with decoding the first system information block to extract the cell load information, i.e., the expected scheduling probability $\beta_i$. Then UE computes the selection metric denoted as $R_i$ as follows:
\begin{align} 
R_i = \beta_i \log ( 1 + \snr_i  ) .
\label{eq:1}
\end{align}
If $R_i$ is greater than or equal to some connection threshold $\mu$, UE selects the cell $i$ and completes the cell search and selection process. Otherwise, UE repeats the process with another cell.

The proposed initial cell search and selection scheme is simple yet powerful. It augments the existing pure received power based scheme with a new load factor broadcast as part of the system information. It is up to the BS to decide and broadcast the value of the load factor, i.e., the expected scheduling probability $\beta_i$. For example, assuming that the number of active UEs served by BS $i$ is $M_i$, a UE in the initial cell search may expect its scheduling probability to be $1/(M_i + 1)$ if it selects BS $i$ as its serving cell. Therefore, the value of $\beta_i$ broadcast by BS $i$ may be chosen to be ${1}/(M_i + 1)$.

As an illustrative example, Figure \ref{fig:1} shows a realization of load distribution under different association schemes in a two-tier network. In the network,  a macro BS is located at the center $(0, 0)$ m, four micro BSs are respectively located at $(100, 100)$ m, $(-100, 100)$ m, $(-100, -100)$ m, and $(100, -100)$ m, and $100$ UEs are uniformly distributed. The macro BS operates in the $2$ GHz band with $20$ MHz carrier bandwidth, and the four micro BSs operate in the $39$ GHz band with $1$ GHz carrier bandwidth. The transmit powers of the macro BS and the micro BSs are respectively $46$ dBm and $23$ dBm. A total $30$ dB beamforming gain is assumed for each link between micro BSs and UEs. The noise power spectral density is $-174$ dBm/Hz. The UE noise figure is $9$ dB. The path loss of each link is equal to $20\log_{10} \left( \frac{4\pi}{\varrho} \right) + \alpha \cdot 10\log_{10} ( d )$ dB, where $\varrho$ is the wavelength, $\alpha=3.8$ is the path loss factor, and $d$ is the length of the radio link. Each link is also subject to a random log-normal shadowing with $7$ dB standard deviation.

Figure \ref{fig:1001} shows the load distribution under max-received-power association. The number of UEs selecting the macro BS is $76$, while the numbers of UEs selecting the four micro BSs are $5, 4, 6$, and $9$, respectively. Figure \ref{fig:1002} shows the load distribution under max-SNR association. Due to the high noise floor associated with the large mmWave bandwidth, the numbers of UEs selecting the four micro BSs become even fewer. Specifically, in Figure \ref{fig:1002}, the number of UEs selecting the macro BS is $97$, while the numbers of UEs selecting the four micro BSs are $1, 1, 0$, and $1$, respectively. Figure \ref{fig:1003} shows the load distribution under the proposed max-selection-metric association, with the selection metric defined in (\ref{eq:1}). In this simulation, the value of $\beta_i$ broadcast by BS $i$ is equal to ${1}/(M_i + 1)$. Initially, $M_i = 0$. UE performs cell selection one by one. If a UE selects BS $i$, $M_i$ is incremented by $1$, and $\beta_i$ is updated accordingly. For the purpose of illustration, the SNR threshold $\Gamma$ is set to be $-\infty$, and the connection threshold $\mu$ of each UE is chosen such that each UE selects the BS that yields the maximum value of the selection metric for the UE. In Figure \ref{fig:1003}, the number of UEs selecting the macro BS is $25$, while the numbers of UEs selecting the four micro BSs are $21, 15, 20$, and $19$, respectively. Clearly, compared to the load distribution in Figure \ref{fig:1001} or Figure \ref{fig:1002}, this is a much more balanced load distribution that sufficiently utilizes the abundant radio resources brought by the mmWave spectrum.

\begin{figure}[ht]
 \centering
 \subfigure[]{
  \includegraphics[width=8.5cm]{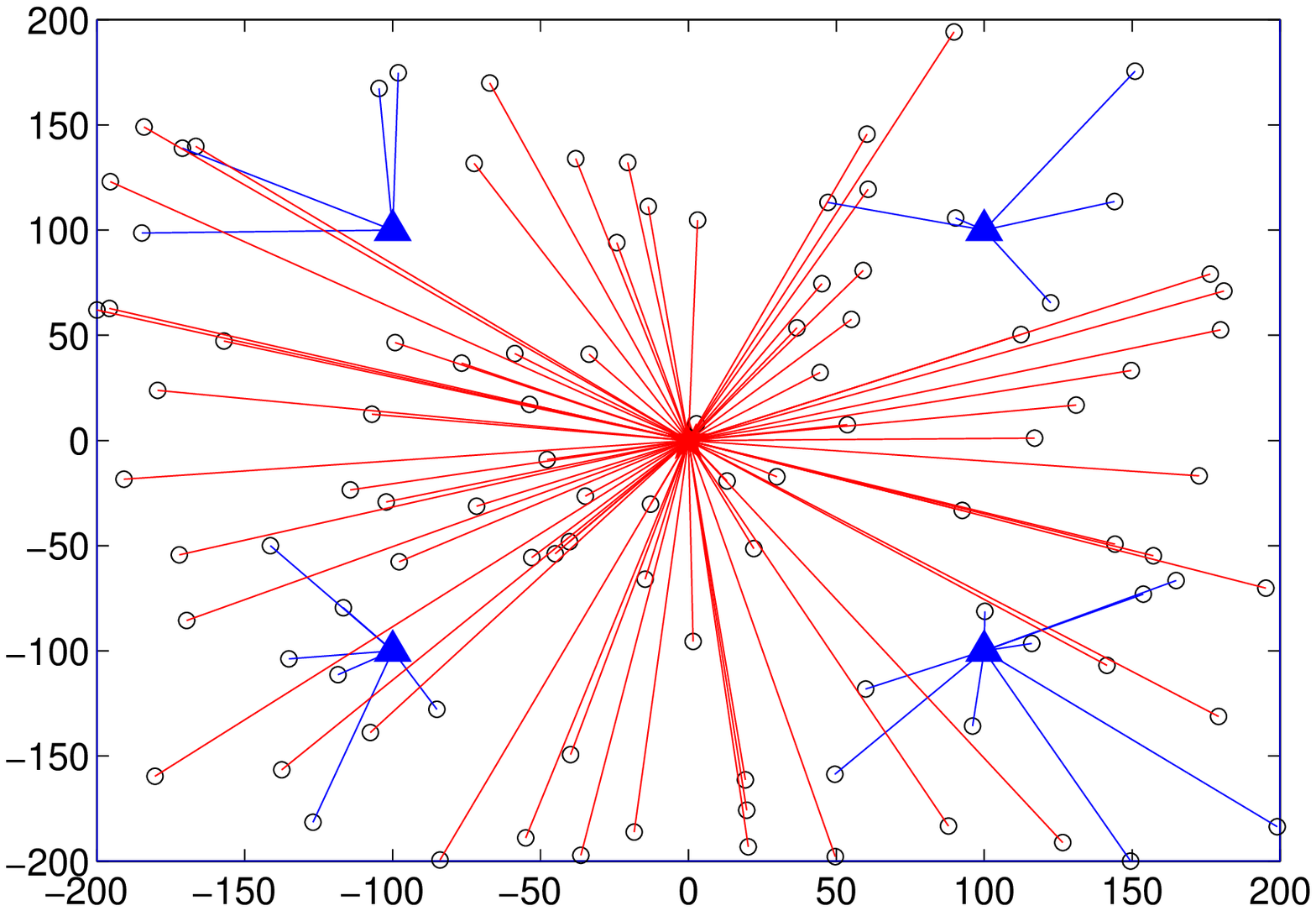}
   \label{fig:1001}
   }
 \subfigure[]{
  \includegraphics[width=8.5cm]{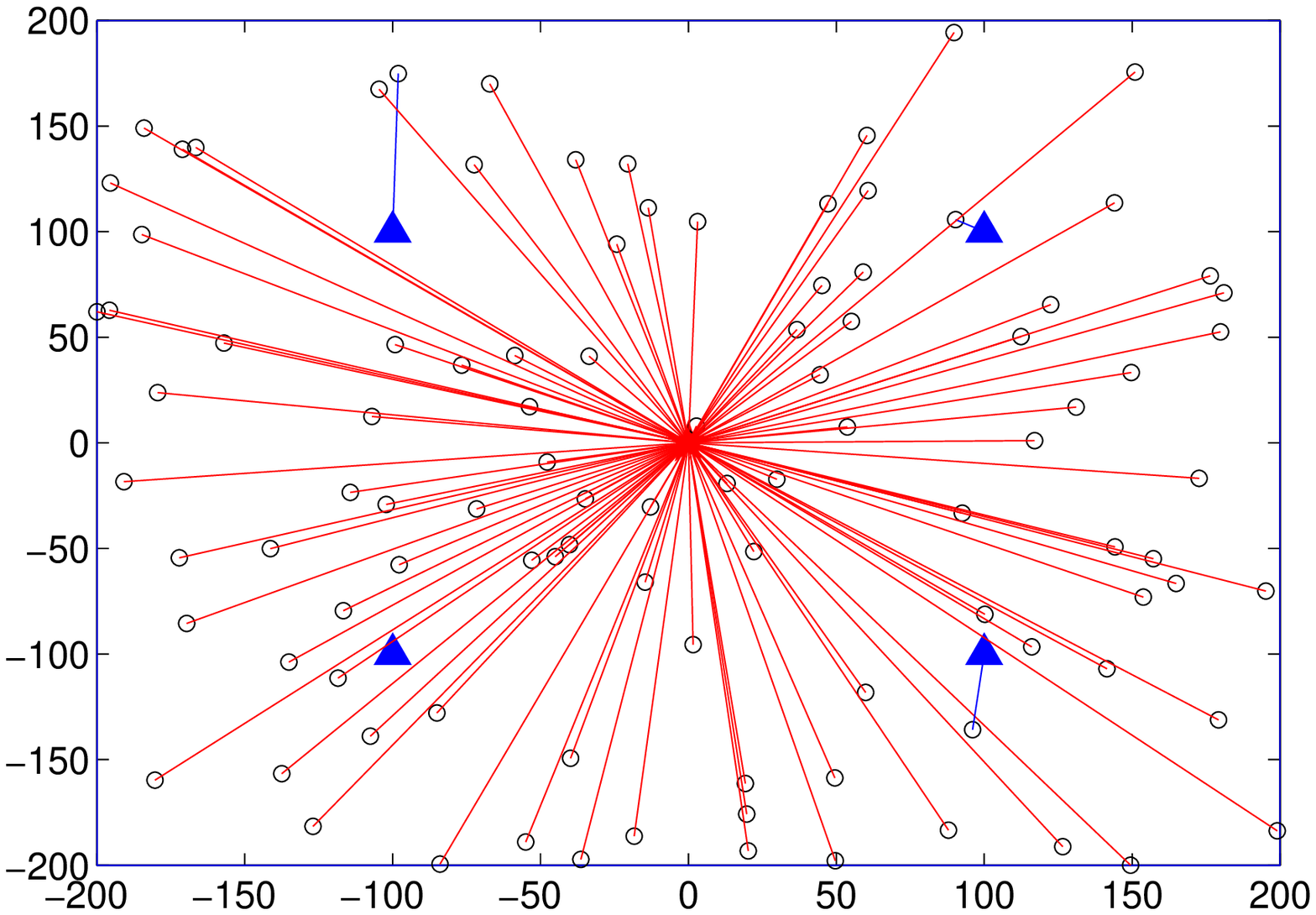}
   \label{fig:1002}
   }
   \subfigure[]{
  \includegraphics[width=8.5cm]{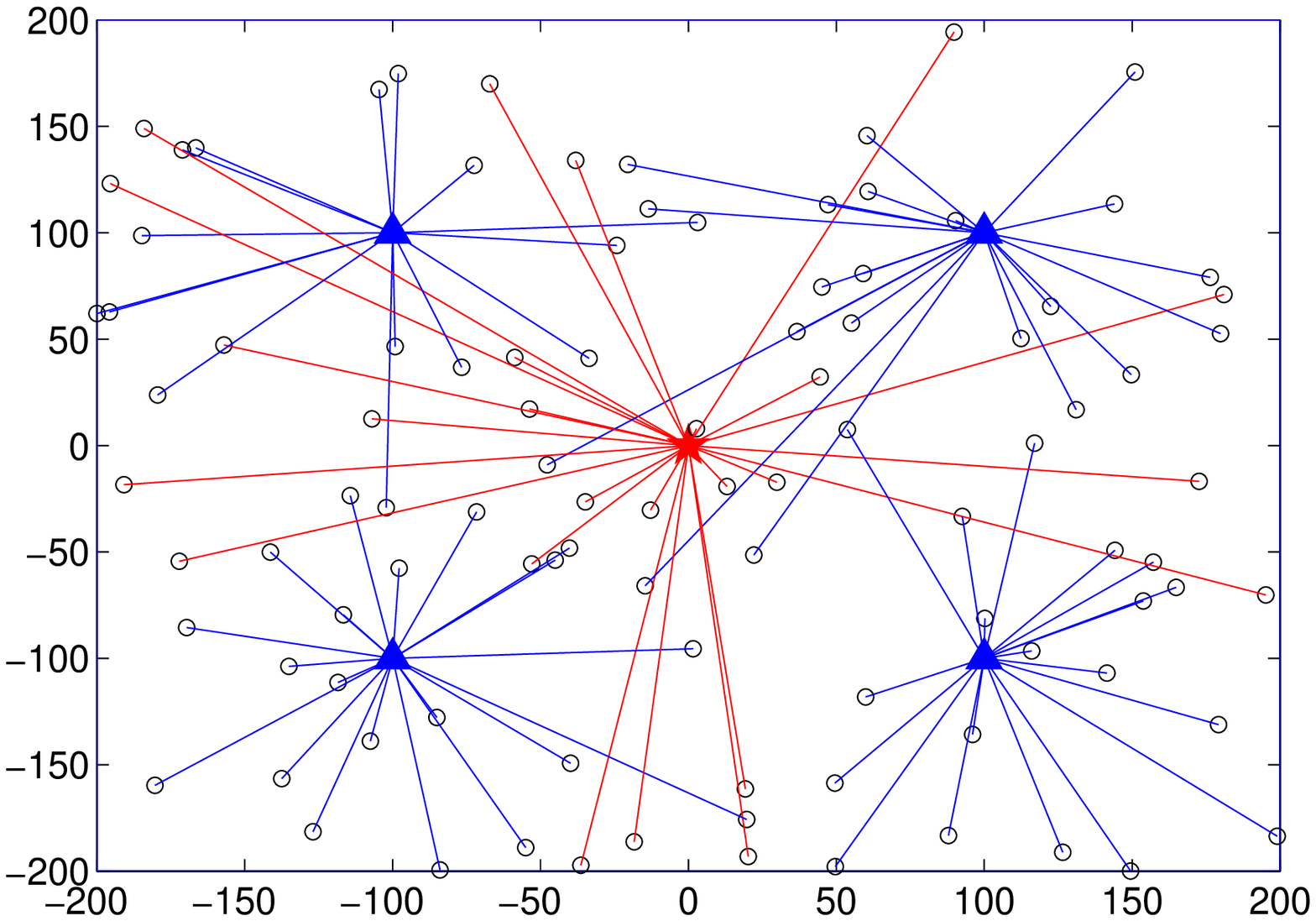}
   \label{fig:1003}
   }
 \caption[]{A realization of load distribution under different associations in a two-tier network: max-received-power association in Figure \ref{fig:1001}; max-SNR association in Figure \ref{fig:1002}; max-selection-metric association in Figure \ref{fig:1003}.  
}
\label{fig:1}
\end{figure}

This proposed scheme equips the networks with a powerful access control method that facilitates load balancing. Further, as will be shown later, the proposed scheme is throughput optimal with a carefully selected connection threshold $\mu$. Note that it is also up to the BS to schedule the radio resources for each connected UE. In particular, the actual scheduling probability or the allocated radio resources of a UE can deviate from the expected scheduling probability broadcast as part of the system information.

\section{System Model and Problem Formulation}
\label{sec:model}

In this section, we describe a system model and formulate a throughput optimization problem for the initial cell search and selection scheme proposed in Section \ref{sec:proposed}.

\subsection{Synchronization}
\label{subsec:sync}

In 5G networks, UE needs to determine the transmit spatial signature of a candidate BS and its corresponding receive spatial signature. We assume that UE learns the beamforming directions by detecting synchronization signals. Specifically, BSs periodically broadcast known synchronization signals with a period of $T_{\textrm{syn}}$ seconds. To enable the detection of beamforming directions, each BS cycles through its set of transmit spatial signatures, and each UE also cycles through its set of receive spatial signatures. In a nutshell, denoting by $L$ the number of possible transmit-receive spatial signature pairs, each beamforming scan cycle takes $L T_{\textrm{syn}}$ seconds.

The described synchronization model is general and flexible. For example, assume that each BS and each UE are equipped with an antenna array with $N_{\textrm{bs}}$ antennas and $N_{\textrm{ue}}$ antennas, respectively. Accordingly, $N_{\textrm{bs}}$ and $N_{\textrm{ue}}$ orthogonal beamspace directions are available at the BS and the UE, respectively. If both the BS and the UE scan over all the orthogonal beamspace directions, $L = N_{\textrm{bs}} N_{\textrm{ue}}$. If the BS scans over all the orthogonal beamspace directions but the UE uses an omni-directional or fixed antenna pattern, $L = N_{\textrm{bs}}$. On the contrary, $L = N_{\textrm{ue}}$ if the UE scans over all the orthogonal beamspace directions but the BS uses an omni-directional or fixed antenna pattern. If both the BS and the UE use omni-directional or fixed antenna patterns, $L = 1$.

Synchronization channels are usually designed to enable high detection rate at very low SNR. Therefore, for simplicity we assume that each UE is able to obtain synchronization and determine the beamforming directions after a beamforming scan cycle.

\subsection{Extracting Cell Load Information}
\label{subsec:load}

We assume that BSs periodically broadcast the system information block containing  cell load information with a period of $T_{\textrm{sib}}$ seconds. Without loss of generality, we assume $T_{\textrm{sib}} \geq T_{\textrm{syn}}$; the analysis in this paper can be straightforwardly extended to the case $T_{\textrm{sib}} < T_{\textrm{syn}}$. The period $T_{\textrm{sib}}$ is further chosen such that 
$T_{\textrm{sib}}/T_{\textrm{syn}}$ is an integer, facilitating the alignment of synchronization signals and system information broadcast channels. In particular, each system information broadcast channel can be positioned right after a synchronization signal. This minimizes the waiting time between a synchronization signal and a system information broadcast channel, leading to reduced latency of initial cell search and selection.


Similar to synchronization channels, system information broadcast channels are usually designed to cover UEs with low SNR. Therefore, for simplicity we assume that each UE is able to decode the system information block and extract cell load information in one shot.

\subsection{Random Access Procedure and Data Communication}

After UE finishes initial cell search and selection, it can initiate the random access procedure by transmitting a random access preamble to the selected BS. Once the random access procedure is completed, a connection between the UE and the BS is established, after which data communication may be scheduled. 

We assume for simplicity that the random access procedure takes a fixed time of $T_{\textrm{ra}}$ seconds. We focus on downlink data communication and assume that it lasts for a time duration of $T_{\textrm{data}}$ seconds.

\subsection{Problem Formulation}

Consider a full communication period, during which a UE searches for $n$ cells in the initial cell search and then selects a cell for the random access and data communication. Let $T_n$ be the duration of the period. It follows that
\begin{align}
 T_n  = \sum_{i=1}^n \left(Y_i + (L-1) T_{\textrm{syn}}  + Z_i \mathbb{I} ( \snr_i \geq \Gamma ) \right) + T_{\textrm{ra}} + T_{\textrm{data}} ,
 \label{eq:01}
\end{align}
where $Y_i$ denotes the duration between the time instant that UE starts searching for cell $i$ and the time instant that UE finds the first synchronization signal of cell $i$, and
$Z_i$ denotes the duration between the start of the last synchronization signal that the UE searches in cell $i$ and the start of the following system information broadcast channel that the UE tries to decode in cell $i$. 

We assume that the selection metrics $\{R_i\}$ defined in (\ref{eq:1}) are  independent and identically distributed (i.i.d.). Denote by $R$ the generic random variable for $\{R_i\}$ with cumulative distribution function $F_R(x)$. We further assume that the second moment of $R$ exists. After searching $n$ cells, the UE may select the best cell for the data communication. Accordingly, UE may expect that the amount of downlink bits that it can receive equals
\begin{align}
U_n =W T_{\textrm{data}} \max_{i=1,...,n}  R_{i} \mathbb{I} ( \snr_i \geq \Gamma ) ,
\label{eq:02}
\end{align}
where $W$ denotes the channel bandwidth. Searching for more cells increases the probability of finding a better cell, since $U_n$ is monotonically increasing with $n$. However, increasing the number $n$ of searched cells also increases the duration of the overall communication period $T_n$. The tradeoff naturally raises the question: How many cells should the UE search before it stops searching and starts the random access procedure and data communication? 

If the same cell search and selection rule is used across $m$ communication periods, the total number of downlink bits that the UE can receive equals
$\sum_{k=1}^m U_{n(k)}$, where $n(k)$ denotes the number of cells searched in the $k$-th period. Accordingly, the duration of the $m$ periods equals $\sum_{k=1}^m T_{n(k)}$. Therefore, the throughput (bit/s) equals $\sum_{k=1}^m U_{n(k)}/\sum_{k=1}^m T_{n(k)}$. Letting $m\to\infty$, by the law of large numbers, the ergodic throughput equals ${ \mathbb{E}[ U_N ] }/{\mathbb{E}[ T_N ]}$.

Denoting the set of admissible cell search and selection rules by 
\begin{align}
\mathcal{C}=\{N \in \mathbb{N}^+: \mathbb{E}[T_N] < +\infty\},
\end{align}
our objective is to find an optimal stopping rule $N^\star$ under the proposed cell search and selection scheme to obtain the maximum ergodic throughput $\lambda^\star$. Mathematically, the throughput optimization problem is written as follows:
\begin{align}
\lambda^\star \triangleq \sup_{N \in \mathcal{C}} \frac{ \mathbb{E}[ U_N ] }{\mathbb{E}[ T_N ]}.
\label{eq:3}
\end{align}

\section{Throughput Optimal Stopping of Initial Cell Search and Selection}
\label{sec:optimal}

\subsection{Threshold Policy Achieves the Optimal Throughput} 
 
In this section, we characterize the throughput optimal stopping strategy $N^\star$ and the attained maximum throughput $\lambda^\star$. To this end, we first derive the expected duration of a typical communication period in Lemma \ref{lem:1}.

\begin{lem}
The expected duration $\mathbb{E}[T_n]$ of a communication period with $n$ searched cells is given by
\begin{align}
\mathbb{E} [ T_n ] = & n \left(L - \frac{1}{2} \right) T_{\textrm{syn}} +  \frac{n}{2}(T_{\textrm{sib}} - T_{\textrm{syn}})    \mathbb{P} ( \snr\geq \Gamma )   \notag \\
& + T_{\textrm{ra}} + T_{\textrm{data}} .
\label{eq:6}
\end{align}
\label{lem:1}
\end{lem}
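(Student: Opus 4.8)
The plan is to start from the pathwise identity (\ref{eq:01}) for $T_n$ and take expectations term by term using linearity, reducing the lemma to evaluating three quantities per examined cell: $\mathbb{E}[Y_i]$, the deterministic scan overhead $(L-1)T_{\textrm{syn}}$, and $\mathbb{E}\big[Z_i\,\mathbb{I}(\snr_i \geq \Gamma)\big]$; then I add $T_{\textrm{ra}}+T_{\textrm{data}}$ and simplify.

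First I would handle $\mathbb{E}[Y_i]$. Because the synchronization signals of cell $i$ are broadcast periodically with period $T_{\textrm{syn}}$ and the instant at which the UE begins to examine cell $i$ bears no particular relation to that cell's synchronization frame, the natural modeling assumption is that the residual time $Y_i$ until the next synchronization signal is uniform on $[0,T_{\textrm{syn}}]$, so $\mathbb{E}[Y_i]=T_{\textrm{syn}}/2$. The middle term is deterministic and contributes $(L-1)T_{\textrm{syn}}$ for each examined cell, regardless of the SNR.

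The main work is $\mathbb{E}\big[Z_i\,\mathbb{I}(\snr_i \geq \Gamma)\big]$. I would first argue that $Z_i$ is a purely timing quantity — the elapsed time from the last synchronization signal scanned in cell $i$ to the ensuing system information broadcast channel — whereas the event $\{\snr_i \geq \Gamma\}$ is determined by the propagation channel and the chosen beam pair; these are independent, so $\mathbb{E}\big[Z_i\,\mathbb{I}(\snr_i \geq \Gamma)\big]=\mathbb{E}[Z_i]\,\mathbb{P}(\snr_i \geq \Gamma)$, and since the $R_i$ (hence the $\snr_i$) are i.i.d.\ this equals $\mathbb{E}[Z_i]\,\mathbb{P}(\snr \geq \Gamma)$. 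To compute $\mathbb{E}[Z_i]$, set $k := T_{\textrm{sib}}/T_{\textrm{syn}}\in\mathbb{N}$ and recall that each system information broadcast channel is aligned to immediately follow a synchronization signal; hence within one period of length $T_{\textrm{sib}}$ there are exactly $k$ synchronization signals, precisely one of which is immediately followed by the SIB. Since the UE's arrival time is arbitrary relative to this grid, the last synchronization signal scanned falls on each of these $k$ positions with equal probability $1/k$ (the deterministic shift by $L-1$ synchronization periods during the beamforming scan does not affect this), so $Z_i$ is uniformly distributed over $\{0,T_{\textrm{syn}},2T_{\textrm{syn}},\dots,(k-1)T_{\textrm{syn}}\}$ and therefore $\mathbb{E}[Z_i]=\frac{1}{k}\sum_{j=0}^{k-1} jT_{\textrm{syn}}=\frac{k-1}{2}T_{\textrm{syn}}=\frac{1}{2}(T_{\textrm{sib}}-T_{\textrm{syn}})$.

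Finally I would assemble the pieces: summing the per-cell contributions over $i=1,\dots,n$ and adding $T_{\textrm{ra}}+T_{\textrm{data}}$ gives $\mathbb{E}[T_n]=n\big(\tfrac{1}{2}T_{\textrm{syn}}+(L-1)T_{\textrm{syn}}\big)+\tfrac{n}{2}(T_{\textrm{sib}}-T_{\textrm{syn}})\,\mathbb{P}(\snr \geq \Gamma)+T_{\textrm{ra}}+T_{\textrm{data}}$, and using $\tfrac{1}{2}+(L-1)=L-\tfrac{1}{2}$ yields (\ref{eq:6}). The only delicate points are the justification of the uniform laws of $Y_i$ and $Z_i$ from the periodic broadcast structure and the independence of $Z_i$ from the SNR event; once those are granted, the remainder is routine bookkeeping with linearity of expectation.
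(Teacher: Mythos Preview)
Your proposal is correct and follows essentially the same approach as the paper: linearity of expectation applied to (\ref{eq:01}), independence of $Z_i$ from the SNR event to factor the indicator term, the uniform law giving $\mathbb{E}[Y_i]=T_{\textrm{syn}}/2$, and the discrete uniform law of $Z_i$ over $\{0,T_{\textrm{syn}},\dots,(k-1)T_{\textrm{syn}}\}$ giving $\mathbb{E}[Z_i]=(T_{\textrm{sib}}-T_{\textrm{syn}})/2$. If anything, you supply slightly more justification for the uniform distributions than the paper itself does.
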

\begin{proof}
By the expression (\ref{eq:01}) of $T_n$ and the linearity of expectation, we have that
\begin{align}
 \mathbb{E} [T_n]  &= \sum_{i=1}^n \left( \mathbb{E} [Y_i] + (L-1) T_{\textrm{syn}}  + \mathbb{E} [ Z_i \mathbb{I} ( \snr_i \geq \Gamma ) ] \right) \notag \\
 &\quad \quad \quad \quad + T_{\textrm{ra}} + T_{\textrm{data}} \notag \\
 &= n(L-1) T_{\textrm{syn}}  + \sum_{i=1}^n \left( \mathbb{E} [Y_i ] +  \mathbb{E} [ Z_i ]  \mathbb{P} [  \snr_i \geq \Gamma  ] \right)  \notag \\
 &\quad \quad \quad \quad + T_{\textrm{ra}} + T_{\textrm{data}} ,
 \label{eq:4}
\end{align}
where in the last equality we have used the fact that $Z_i$ and $\snr_i$ are independent. By the assumptions in Section \ref{subsec:sync}, it is clear that
\begin{align}
\mathbb{E} [Y_i] = \frac{T_{\textrm{syn}}}{2}  , \quad\forall i = 1, ..., n.
\label{eq:5}
\end{align}
By the assumption in Section \ref{subsec:load}, we have that 
\begin{align}
\mathbb{P} ( Z_i = j T_{\textrm{syn}} ) = \frac{T_{\textrm{syn}}}{T_{\textrm{sib}}}, \quad j = 0, ..., \frac{T_{\textrm{sib}}}{T_{\textrm{syn}}} - 1 .
\end{align}
It follows that
\begin{align}
\mathbb{E} [Z_i] = \frac{T_{\textrm{sib}} - T_{\textrm{syn}}}{2} , \quad\forall i = 1, ..., n.
\label{eq:05}
\end{align}
Plugging (\ref{eq:5}) and (\ref{eq:05})  into (\ref{eq:4}) yields (\ref{eq:6}).
\end{proof}

While Lemma \ref{lem:1} characterizes the expected duration of a communication period, during which the UE searches $n$ cells and selects a cell for the random access and data communication, different stopping rules yield different numbers of searched cells that possibly vary across communication periods. With Lemma \ref{lem:1}, we are now in a position to derive the optimal stopping rule in Proposition \ref{pro:1}.

\begin{pro}
The optimal stopping rule for the throughput maximization problem (\ref{eq:3}) is given by
\begin{align}
& N^\star =\min \left\{ n \geq 1:  U_n   \geq  \lambda^\star ( T_{\textrm{ra}} + T_{\textrm{data}} ) \right\},
\label{eq:12}
\end{align}
where $\lambda^\star$ is the unique maximum throughput. Further, $\lambda^\star$ is the solution to the following fixed point equation:
\begin{align}
&\mathbb{E} \left[ ( U_n  - \lambda  ( T_{\textrm{ra}} + T_{\textrm{data}} )  )^+  \right] \notag \\
&=  \lambda  \left( (L - \frac{1}{2}) T_{\textrm{syn}}  + \frac{T_{\textrm{sib}} - T_{\textrm{syn}}}{2} \mathbb{P} ( \snr_n \geq \Gamma ) \right)   ,
\label{eq:11}
\end{align}
where $(x)^+ \triangleq \max(x, 0)$.
\label{pro:1}
\end{pro}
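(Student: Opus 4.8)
The plan is to recognize (\ref{eq:3}) as a maximal rate-of-return problem and reduce it to a one-parameter family of ordinary optimal stopping problems. Throughout, write $c \triangleq T_{\textrm{ra}} + T_{\textrm{data}}$, $\delta \triangleq (L-\tfrac12)T_{\textrm{syn}} + \tfrac{T_{\textrm{sib}} - T_{\textrm{syn}}}{2}\,\mathbb{P}(\snr \geq \Gamma)$ (so Lemma \ref{lem:1} reads $\mathbb{E}[T_n] = n\delta + c$), $V_i \triangleq R_i\,\mathbb{I}(\snr_i \geq \Gamma)$ and $M_n \triangleq \max_{i\leq n} V_i$, so that $U_n = W T_{\textrm{data}} M_n$ and the generic single-cell reward is $W T_{\textrm{data}} V$ with $V \sim V_1$. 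The basic equivalence is $\mathbb{E}[U_N]/\mathbb{E}[T_N] \geq \lambda \iff \mathbb{E}[U_N - \lambda T_N] \geq 0$ (using $\mathbb{E}[T_N] > 0$); consequently, if for some $\lambda_0$ the auxiliary value $\psi(\lambda_0) \triangleq \sup_{N\in\mathcal{C}} \mathbb{E}[U_N - \lambda_0 T_N]$ equals $0$ and is attained by some $N_0 \in \mathcal{C}$, then $\lambda^\star = \lambda_0$ and $N^\star = N_0$ solve (\ref{eq:3}). We assume the nontrivial case $\mathbb{P}(R > 0,\, \snr \geq \Gamma) > 0$, as otherwise $U_n \equiv 0$ and everything is trivial.

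First I would solve the auxiliary problem for fixed $\lambda > 0$. With $U_n - \lambda T_n = W T_{\textrm{data}} M_n - \lambda\sum_{i=1}^n X_i - \lambda c$, where $X_i$ denotes the per-cell search duration and $\{\mathcal{F}_n\}$ the natural filtration, a direct computation gives the one-step-look-ahead gain $\mathbb{E}[(U_{n+1}-\lambda T_{n+1}) - (U_n - \lambda T_n)\mid\mathcal{F}_n] = W T_{\textrm{data}}\,h(M_n) - \lambda\delta$, where $h(m) \triangleq \mathbb{E}[(V-m)^+]$ and $\mathbb{E}[X_{n+1}\mid\mathcal{F}_n] = \delta$ regardless of the correlation between $X_{n+1}$ and $V_{n+1}$ (both carry the factor $\mathbb{I}(\snr_{n+1}\geq\Gamma)$). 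Since $h$ is nonnegative, continuous, and strictly decreasing up to the essential supremum of $V$, while $M_n$ is nondecreasing, this gain is pathwise nonincreasing in $n$: the problem is monotone. Hence, by the monotone-case theorem of optimal stopping — whose integrability side-condition $\mathbb{E}[\sup_n (U_n - \lambda T_n)] < \infty$ follows from $\mathbb{E}[R^2] < \infty$ together with the linear growth of $T_n$ — the one-step-look-ahead rule is optimal, and it has the threshold form $N_\lambda = \min\{n\geq 1 : M_n \geq v_\lambda\}$ with $v_\lambda$ the unique root of $W T_{\textrm{data}}\,h(v_\lambda) = \lambda\delta$. Because $M_n$ is a running maximum, $N_\lambda$ is simply the first $n$ with $V_n \geq v_\lambda$, hence geometric with parameter $p_\lambda = \mathbb{P}(V \geq v_\lambda) > 0$, so $N_\lambda \in \mathcal{C}$.

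Next I would evaluate $\psi(\lambda) = \mathbb{E}[U_{N_\lambda} - \lambda T_{N_\lambda}]$. Wald's identity applies to $\sum_{i=1}^{N_\lambda} X_i$ — valid since $\{N_\lambda \geq i\} \in \mathcal{F}_{i-1}$ and $X_i$ is independent of $\mathcal{F}_{i-1}$, requiring no independence between $X_i$ and $V_i$ — giving $\mathbb{E}[T_{N_\lambda}] = \delta/p_\lambda + c$; and since $M_{N_\lambda} = V_{N_\lambda}$ at the stopping time, $\mathbb{E}[U_{N_\lambda}] = W T_{\textrm{data}}\,\mathbb{E}[V\,\mathbb{I}(V\geq v_\lambda)]/p_\lambda$. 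Writing $W T_{\textrm{data}}\,\mathbb{E}[V\,\mathbb{I}(V\geq v_\lambda)] = W T_{\textrm{data}}\,h(v_\lambda) + W T_{\textrm{data}}\,v_\lambda p_\lambda = \lambda\delta + W T_{\textrm{data}}\,v_\lambda p_\lambda$ and simplifying, the expression collapses to $\psi(\lambda) = W T_{\textrm{data}}\,v_\lambda - \lambda c$. Thus $\psi(\lambda) = 0$ exactly when $v_\lambda = \lambda c/(W T_{\textrm{data}})$; substituting this into $W T_{\textrm{data}}\,h(v_\lambda) = \lambda\delta$ and clearing $W T_{\textrm{data}}$ yields $\mathbb{E}[(W T_{\textrm{data}} V - \lambda c)^+] = \lambda\delta$, i.e. $\mathbb{E}[(U_n - \lambda c)^+] = \lambda\delta$ for the generic single-cell reward — which is exactly (\ref{eq:11}). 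The difference of its two sides is a function of $\lambda$ that is continuous and strictly decreasing (here $\delta > 0$ since $L\geq 1$), positive at $\lambda = 0$ and tending to $-\infty$, hence with a unique root $\lambda_0 > 0$; by the reduction of the first paragraph, $\lambda^\star = \lambda_0$ is the unique maximum throughput, attained by $N^\star = N_{\lambda_0} = \min\{n\geq 1 : M_n \geq \lambda^\star c/(W T_{\textrm{data}})\} = \min\{n\geq 1 : U_n \geq \lambda^\star(T_{\textrm{ra}}+T_{\textrm{data}})\}$, i.e. (\ref{eq:12}).

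I expect the main obstacle to be the global optimality of the one-step-look-ahead rule for the auxiliary problem: one must check that the problem really is in the monotone case (the pathwise monotonicity of the look-ahead gain) and that the integrability hypothesis of the monotone-case theorem holds — this is precisely where the assumed finiteness of the second moment of $R$ is used. The other point to handle with care is that the statistical dependence between the search time $X_i$ and the selection metric $V_i$ affects neither the one-step conditional-mean computation nor the validity of Wald's identity; granting this, everything past the monotone-case reduction (Wald's identity and the algebraic collapse $\psi(\lambda) = W T_{\textrm{data}}\,v_\lambda - \lambda c$) is routine bookkeeping.
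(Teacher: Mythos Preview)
Your proof is correct and takes a genuinely different route from the paper's. The paper solves the auxiliary problem $V(\lambda)=\sup_N\mathbb{E}[U_N-\lambda T_N]$ by writing down the dynamic-programming optimality equation, asserting that ``the problem starts over again'' after each unsuccessful cell (so the continuation value is again $V(\lambda)$ minus elapsed cost), and then manipulating the resulting one-line recursion into (\ref{eq:8}); setting $V(\lambda^\star)=0$ via Ferguson's rate-of-return theorem then yields (\ref{eq:11}) and (\ref{eq:12}). You instead recognize the auxiliary problem as a \emph{monotone} stopping problem: the one-step conditional gain $W T_{\textrm{data}}\,h(M_n)-\lambda\delta$ is pathwise nonincreasing because $M_n$ is a running maximum and $h$ is decreasing, so the one-step-look-ahead threshold rule is optimal; you then evaluate $\psi(\lambda)$ explicitly via Wald's identity and obtain the elegant collapse $\psi(\lambda)=W T_{\textrm{data}}\,v_\lambda-\lambda c$, from which $\psi(\lambda)=0$ immediately gives both the threshold and the fixed-point equation. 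Your argument is more careful on the point where the paper is loosest---the ``problem restarts'' claim tacitly presumes that the continuation value does not depend on the recalled maximum $M_{n-1}$, which is really what Proposition~\ref{pro:2} establishes afterward---and it delivers Proposition~\ref{pro:2} for free since your OSLA rule is already of the form $\{V_n\ge v_\lambda\}$. What the paper's Bellman-equation approach buys is brevity and avoidance of the monotone-case side conditions (the integrability check $\mathbb{E}[\sup_n(U_n-\lambda T_n)]<\infty$ that you correctly flag); what your approach buys is a self-contained derivation that does not lean on the time-invariance heuristic, plus the explicit formula $\psi(\lambda)=W T_{\textrm{data}}\,v_\lambda-\lambda c$, which the paper never writes down.
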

\begin{proof}
We first solve the associated ordinary optimal stopping problem:
\begin{align}
V( \lambda ) &\triangleq  \sup_{N \in \mathcal{C}}   \mathbb{E} [U_N - \lambda T_N ]  ,
\label{eq:2} 
\end{align}
where $\lambda$ is an arbitrary positive number.
The duration of searching for cell $1$ equals
$
Y_1 + (L-1) T_{\textrm{syn}}  + Z_1 \mathbb{I} ( \snr_1 \geq \Gamma ).
$
If the UE stops searching and selects cell $1$ for the random access and data communication, it can receive $U_1 = R_{1} \mathbb{I} ( \snr_1 \geq \Gamma )$ information bits. If instead the UE continues to search for more cells from this point, the $U_1$ bits are not transmitted and the time $Y_1 + (L-1) T_{\textrm{syn}}  + Z_1 \mathbb{I} ( \snr_1 \geq \Gamma )$ has passed. In the search of cell $2$, the problem starts over again, implying that the problem is invariant in time. 
Generalizing this to stage $n$, if the UE stops searching and selects the best cell from the $n$ searched cells for the random access and data communication, it obtains a utility of $U_n - \lambda T_n$. If the UE continues to search for more cells, it obtains a utility of $V(\lambda)- \lambda (T_{n-1} + Y_n + (L-1) T_{\textrm{syn}}  + Z_n \mathbb{I} ( \snr_n \geq \Gamma ) )$. By the optimality equation of dynamic programming \cite{bertsekas1995dynamic},
\begin{align}
&V (\lambda) -\lambda T_{n-1}  = \mathbb{E} [ \max ( U_n - \lambda T_n,  V (\lambda)    \notag \\
& -\lambda   (T_{n-1} + Y_n + (L-1) T_{\textrm{syn}}  + Z_n \mathbb{I} ( \snr_n \geq \Gamma ) ) ) ] .
\label{eq:07}
\end{align}
To obtain the maximum utility $V(\lambda)$, the UE can stop searching once the currently achievable utility is not less than the maximum expected utility that is obtained with continuing. In other words, the UE can stop searching if 
\begin{align}
& U_n -  \lambda T_n \geq  V (\lambda)    \notag \\
&- \lambda   (T_{n-1} + Y_n + (L-1) T_{\textrm{syn}}  + Z_n \mathbb{I} ( \snr_n \geq \Gamma ) ).
\label{eq:09}
\end{align}
Adding $\lambda T_{n}$ to both sides of (\ref{eq:09}) yields that
\begin{align}
U_n  \geq  V (\lambda) + \lambda  ( T_{\textrm{ra}} + T_{\textrm{data}} ).
\end{align}
Therefore, the optimal stopping strategy for the associated ordinary optimal stopping problem is given by
\begin{align}
N^\star (\lambda)  = \min \{ n \geq 1: U_n  \geq  \lambda  ( T_{\textrm{ra}} + T_{\textrm{data}} ) + V (\lambda) \} .
\label{eq:9}
\end{align}

Next we characterize $V (\lambda)$. Adding $\lambda T_{n-1}$ to both sides of (\ref{eq:07}) yields that
\begin{align}
& V (\lambda)  = \mathbb{E} [ \max ( U_n - \lambda (T_n - T_{n-1}),  V (\lambda)    \notag \\
&\quad   -\lambda   (Y_n + L T_{\textrm{syn}}  + Z_n \mathbb{I} ( \snr_n \geq \Gamma ) ) ) ]  \notag \\
&= \mathbb{E} [ \max ( U_n - \lambda ( T_{\textrm{ra}} + T_{\textrm{data}} ),  V (\lambda) ) ]   \notag \\
&\quad  -  \mathbb{E} [ \lambda   (Y_n + (L-1) T_{\textrm{syn}}  + Z_n \mathbb{I} ( \snr_n \geq \Gamma ) )  ]  \notag \\
&= \mathbb{E} [ \max ( U_n - \lambda ( T_{\textrm{ra}} + T_{\textrm{data}} ),  V (\lambda) ) ] \notag \\
&\quad  -   \lambda  \left( (L - \frac{1}{2}) T_{\textrm{syn}}  + \frac{T_{\textrm{sib}} - T_{\textrm{syn}}}{2} \mathbb{P} ( \snr_n \geq \Gamma ) \right)  ,
\label{eq:7}
\end{align}
where we have plugged $\mathbb{E} [Y_i]$ and $\mathbb{E} [Z_i] $ (c.f. (\ref{eq:5}) and (\ref{eq:05})) into the last equality. Rearranging the terms in (\ref{eq:7}) yields that
\begin{align}
&\mathbb{E} \left[ ( U_n  - \lambda  ( T_{\textrm{ra}} + T_{\textrm{data}} ) - V (\lambda)  )^+  \right] \notag \\
&= \lambda  \left( (L - \frac{1}{2}) T_{\textrm{syn}}  + \frac{T_{\textrm{sib}} - T_{\textrm{syn}}}{2} \mathbb{P} ( \snr_n \geq \Gamma ) \right)   .
\label{eq:8}
\end{align}

By Theorem 1, Chapter 6 in \cite{ferguson2012optimal}, we know that  $N^\star$ is an optimal stopping rule that attains the maximum throughput $\lambda^\star$ in the throughput optimization problem (\ref{eq:3}) if and only if $N^\star$ is an optimal stopping rule for the ordinary optimal stopping problem (\ref{eq:2}) with $\lambda=\lambda^\star$ and $V(\lambda^\star) = 0$. Plugging $V(\lambda) = 0$ into (\ref{eq:8})  yields (\ref{eq:11}). Letting $V(\lambda) = 0$ in (\ref{eq:9}) yields the optimal stopping strategy in (\ref{eq:12}). 

The left side of (\ref{eq:11}) is continuous in $\lambda$ and decreasing from $\mathbb{E}[U_n^+]$ to zero, while the right side of (\ref{eq:11}) is continuous in $\lambda$ and increasing from $0$ to $+\infty$. Hence, there is a unique solution $\lambda^\star$. This completes the proof.
\end{proof}

Proposition \ref{pro:1} implies that the optimal stopping rule is a pure threshold policy: The initial cell search and selection process stops once the maximum of the expected numbers of downlink bits of the $n$ searched cells exceeds an optimized threshold, i.e.,
\begin{align}
U_n=W T_{\textrm{data}} \max_{i=1,...,n}  R_{i} \mathbb{I} ( \snr_i \geq \Gamma ) \geq \lambda^\star ( T_{\textrm{ra}} + T_{\textrm{data}} ) .
\end{align}
In fact, we can tighten the conclusion and show that the initial cell search and selection process can stop based on the currently examined cell only, as summarized in Proposition \ref{pro:2}.
\begin{pro}
The optimal stopping rule for the throughput maximization problem (\ref{eq:3}) is given by
\begin{align}
& N^\star =\min \left\{ n \geq 1:  R_n \mathbb{I} ( \snr_n \geq \Gamma )  \geq  \frac{\lambda^\star  ( T_{\textrm{ra}} + T_{\textrm{data}} ) }{W T_{\textrm{data}}} \right\},
\label{eq:012}
\end{align}
where $\lambda^\star$ is the unique maximum throughput. Further, $\lambda^\star$ is the solution to the following fixed point equation:
\begin{align}
&\mathbb{E} \left[ ( W T_{\textrm{data}} R_n \mathbb{I} ( \snr_n \geq \Gamma ) - \lambda  ( T_{\textrm{ra}} + T_{\textrm{data}} )  )^+  \right] \notag \\
&=  \lambda  \left( (L - \frac{1}{2}) T_{\textrm{syn}}  + \frac{T_{\textrm{sib}} - T_{\textrm{syn}}}{2} \mathbb{P} ( \snr_n \geq \Gamma ) \right)  .
\label{eq:011}
\end{align}
\label{pro:2}
\end{pro}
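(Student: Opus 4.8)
The plan is to reduce Proposition \ref{pro:2} to Proposition \ref{pro:1} by showing that the threshold rule stated in (\ref{eq:12}) coincides with the simpler rule (\ref{eq:012}), i.e.\ that stopping on the running maximum $U_n = W T_{\textrm{data}} \max_{i \leq n} R_i \mathbb{I}(\snr_i \geq \Gamma)$ exceeding $\lambda^\star(T_{\textrm{ra}} + T_{\textrm{data}})$ is equivalent to stopping on the \emph{current} observation $W T_{\textrm{data}} R_n \mathbb{I}(\snr_n \geq \Gamma)$ exceeding the same constant. First I would observe that because $U_n$ is the running maximum, the event $\{U_n \geq c\}$ for the first time at stage $n$ is exactly the event $\{U_{n-1} < c,\ W T_{\textrm{data}} R_n \mathbb{I}(\snr_n \geq \Gamma) \geq c\}$, which in turn (since the first stopping time under the running-maximum rule has not occurred before $n$, meaning $U_{n-1}<c$ is automatic) is the event that the $n$-th term itself first crosses the threshold. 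Hence the two stopping times are almost surely equal, with $c = \lambda^\star(T_{\textrm{ra}} + T_{\textrm{data}})$; dividing the inequality $W T_{\textrm{data}} R_n \mathbb{I}(\snr_n \geq \Gamma) \geq c$ through by $W T_{\textrm{data}}$ gives precisely (\ref{eq:012}).

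Next I would address the fixed point equation (\ref{eq:011}). The key point is that on the event $\{U_{n-1} < \lambda(T_{\textrm{ra}} + T_{\textrm{data}})\}$ — equivalently, on the event that the process has not yet stopped — the positive part $(U_n - \lambda(T_{\textrm{ra}} + T_{\textrm{data}}))^+$ equals $(W T_{\textrm{data}} R_n \mathbb{I}(\snr_n \geq \Gamma) - \lambda(T_{\textrm{ra}} + T_{\textrm{data}}))^+$, because $U_n$ and its $n$-th term agree whenever either exceeds the running maximum of the first $n-1$ terms. More directly: since the $R_i \mathbb{I}(\snr_i \geq \Gamma)$ are i.i.d., the quantity whose expectation appears in (\ref{eq:11}) can be rewritten. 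Here I should be a little careful, because $\mathbb{E}[(U_n - \lambda(T_{\textrm{ra}}+T_{\textrm{data}}))^+]$ as written in (\ref{eq:11}) appears to depend on $n$, yet the fixed point equation must be $n$-free. The resolution, which I would make explicit, is that in the derivation of Proposition \ref{pro:1} (see (\ref{eq:7})--(\ref{eq:8})) the relevant expectation is really $\mathbb{E}[(W T_{\textrm{data}} R_n \mathbb{I}(\snr_n \geq \Gamma) - \lambda(T_{\textrm{ra}}+T_{\textrm{data}}))^+]$ evaluated at a \emph{single} fresh draw (the $n$-th cell), not at the running maximum over all $n$ cells; the one-step look-ahead structure of the ordinary stopping problem (\ref{eq:2}) compares the reward from the newly examined cell against the continuation value $V(\lambda)$, so only the marginal law of a single $R_n \mathbb{I}(\snr_n \geq \Gamma)$ enters. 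Substituting $U_n \rightarrow W T_{\textrm{data}} R_n \mathbb{I}(\snr_n \geq \Gamma)$ in (\ref{eq:11}) and recalling $U_n = W T_{\textrm{data}} \max_{i \leq n} R_i \mathbb{I}(\snr_i \geq \Gamma)$ from (\ref{eq:02}) then yields (\ref{eq:011}) verbatim.

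Concretely, the steps in order are: (i) write out $U_n$ from (\ref{eq:02}) and expand the running maximum; (ii) show $N^\star$ from (\ref{eq:12}) equals the first $n$ with $W T_{\textrm{data}} R_n \mathbb{I}(\snr_n \geq \Gamma) \geq \lambda^\star(T_{\textrm{ra}}+T_{\textrm{data}})$, using that the running maximum first exceeds a level exactly when the current term does; (iii) divide by $W T_{\textrm{data}}$ to obtain (\ref{eq:012}); (iv) retrace (\ref{eq:7})--(\ref{eq:8}) to confirm that the expectation in the fixed point equation involves only the marginal of a single cell's reward, hence may be written with $W T_{\textrm{data}} R_n \mathbb{I}(\snr_n \geq \Gamma)$ in place of $U_n$, giving (\ref{eq:011}); (v) invoke the existence/uniqueness argument from Proposition \ref{pro:1} (left side continuous and decreasing, right side continuous and increasing) unchanged. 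The main obstacle is step (iv): articulating cleanly why the $n$-dependence is spurious and why the maximum collapses to a single marginal draw inside the expectation — this is really a statement about the one-stage-look-ahead nature of the optimal rule (the rule is \emph{myopic}: once a cell clears the threshold it is never beneficial to keep searching, so the running maximum never does any work that the last term does not already do), and it should be justified by reference to the monotone structure already established rather than by a fresh computation. Everything else is bookkeeping and a direct appeal to Proposition \ref{pro:1}.
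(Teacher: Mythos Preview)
Your proposal is correct and follows essentially the same route as the paper: reduce to Proposition~\ref{pro:1} by showing that the running-maximum threshold rule (\ref{eq:12}) coincides with the single-term threshold rule (\ref{eq:012}), then note that the fixed point equation carries over. The paper does the equivalence by explicit induction on the stage index (stage $1$, then stage $2$ assuming $\hat{R}_1<\rho$, etc.), while you phrase it as a first-passage observation for the running maximum; these are the same argument. Your step (iv) is more explicit than the paper, which simply says the fixed point equation ``can be derived along the same line of the proof of Proposition~\ref{pro:1}'' without unpacking the $U_n$ versus $\hat{R}_n$ distinction inside the expectation.
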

\begin{proof}
We first claim that the rule (\ref{eq:12}) is equivalent to the rule
$
\hat{N} = \min \{ n \geq 1: \hat{R}_n \geq   \rho  \},
$
where for notational simplicity we define 
\begin{align}
\hat{R}_n = W T_{\textrm{data}} R_n \mathbb{I} ( \snr_n \geq \Gamma ) 
\label{eq:0111}
\end{align}
and $\rho = \lambda^\star ( T_{\textrm{ra}} + T_{\textrm{data}} )$. This can be shown by induction. Clearly, at stage $1$ the stopping rule (\ref{eq:12})  and $\hat{N}$ are the same because $U_1 = \hat{R}_1$. In particular, if $U_1 = \hat{R}_1 \geq \rho $, both the rule (\ref{eq:12}) and $\hat{N}$ call for stopping.  If $U_1 = \hat{R}_1 < \rho $, both the rule (\ref{eq:12}) and $\hat{N}$ call for continuing to stage $2$. At stage $2$, if $U_2 = \max (\hat{R}_1, \hat{R}_2) \geq \rho$, then $U_2  = \hat{R}_2$ because $\hat{R}_1 < \rho$ by induction. It follows that both the rule (\ref{eq:12}) and $\hat{N}$ call for stopping.  If $U_2 = \max (\hat{R}_1, \hat{R}_2) < \rho$, then $\hat{R}_2 < \rho$. Thus, both the rule (\ref{eq:12}) and $\hat{N}$ call for continuing to stage $3$. Repeating this argument for stages $3,4,...$, we can see that the rule (\ref{eq:12}) and $\hat{N}$ are  the same stopping rules. Further, it is obvious that $\hat{N}$ is equivalent to the rule (\ref{eq:012}). 

Now we have shown that the optimal initial cell search and selection rule is to select the first cell satisfying $\hat{R}_n \geq \rho$. In particular, it is not necessary to recall any of the previously scanned cells, and the fixed point equation (\ref{eq:011}) can be derived along the same line of the proof of Proposition \ref{pro:1}.
\end{proof}

Proposition \ref{pro:2} implies that not only is the optimal stopping rule a pure threshold policy but also the optimal stopping is based on the currently examined cell $n$ only, i.e.,
\begin{align}
W T_{\textrm{data}}  R_{n} \mathbb{I} ( \snr_n \geq \Gamma ) \geq \lambda^\star ( T_{\textrm{ra}} + T_{\textrm{data}} ) .
\end{align}
In other words, it is not necessary to recall any previously scanned cells: Simply select the cell scanned at the stopping stage $N^\star$. This is a desirable feature in practical systems. In particular, due to e.g., the time varying radio environment, UE mobility, and clock drift, the synchronization with an earlier cell and/or the extracted cell load information might become outdated. Choosing the most recently scanned cell avoids such nuisances.

\subsection{An Example with Binary Selection Metric}

Now we have shown that the proposed initial cell search and selection scheme detailed in Section \ref{sec:proposed} is throughput optimal if the connection threshold $\mu$ is chosen to be
$\frac{ T_{\textrm{ra}} + T_{\textrm{data}}  }{W T_{\textrm{data}}} \lambda^\star$. The threshold however does not admit a closed-form solution and involves solving the fixed point equation (\ref{eq:011}). In what follows, to gain insights  we consider a special case in Corollary \ref{cor:1} where the selection metric of each cell only takes two values.

\begin{cor}
Assume that $\snr_i \geq \Gamma, \forall i$, and that $R = R_{\max}$ with probability $q$ and $R = 0$ with probability $1-q$. The maximum throughput $\lambda^\star $ is given by
\begin{align}
\lambda^\star  = \frac{ q W T_{\textrm{data}}  R_{\max} }{ (L - \frac{1}{2}) T_{\textrm{syn}}  + \frac{T_{\textrm{sib}} - T_{\textrm{syn}}}{2} + q (  T_{\textrm{ra}} + T_{\textrm{data}} )  } 
\label{eq:25}
\end{align}
with the optimal stopping strategy given by $N^\star   = \min \{ n \geq 1: R_n  \geq \frac{1}{1+\phi} R_{\max}  \}$, where 
\begin{align}
\phi = \frac{(L - \frac{1}{2}) T_{\textrm{syn}}  + \frac{T_{\textrm{sib}} - T_{\textrm{syn}}}{2}}{ q (  T_{\textrm{ra}} + T_{\textrm{data}} )  }  .
\label{eq:24}
\end{align}
\label{cor:1}
\end{cor}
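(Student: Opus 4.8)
The plan is to specialize the fixed point equation (\ref{eq:011}) of Proposition \ref{pro:2} to the binary model at hand and then solve it in closed form. Since $\snr_i \geq \Gamma$ for every $i$, the indicator $\mathbb{I}(\snr_n \geq \Gamma)$ equals $1$ almost surely, so $\mathbb{P}(\snr_n \geq \Gamma)=1$ and the quantity $W T_{\textrm{data}} R_n \mathbb{I}(\snr_n \geq \Gamma)$ reduces to $W T_{\textrm{data}} R_n$, which takes the value $W T_{\textrm{data}} R_{\max}$ with probability $q$ and the value $0$ with probability $1-q$. Evaluating the expectation on the left-hand side of (\ref{eq:011}) over these two outcomes, the atom at $R_n=0$ contributes $(-\lambda(T_{\textrm{ra}}+T_{\textrm{data}}))^+=0$ because $\lambda>0$, so (\ref{eq:011}) collapses to $q\,(W T_{\textrm{data}} R_{\max}-\lambda(T_{\textrm{ra}}+T_{\textrm{data}}))^+=\lambda\bigl((L-\tfrac12)T_{\textrm{syn}}+\tfrac{T_{\textrm{sib}}-T_{\textrm{syn}}}{2}\bigr)$.

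Next I would remove the positive part by positing that its argument is nonnegative at the solution, which turns the identity into the linear equation $q W T_{\textrm{data}} R_{\max}=\lambda\bigl((L-\tfrac12)T_{\textrm{syn}}+\tfrac{T_{\textrm{sib}}-T_{\textrm{syn}}}{2}+q(T_{\textrm{ra}}+T_{\textrm{data}})\bigr)$; solving for $\lambda$ gives exactly the candidate value in (\ref{eq:25}). To justify dropping the positive part I would verify that the posited nonnegativity is self-consistent: from (\ref{eq:25}) one has $\lambda^\star(T_{\textrm{ra}}+T_{\textrm{data}})=W T_{\textrm{data}} R_{\max}\cdot\frac{q(T_{\textrm{ra}}+T_{\textrm{data}})}{(L-\frac12)T_{\textrm{syn}}+\frac{T_{\textrm{sib}}-T_{\textrm{syn}}}{2}+q(T_{\textrm{ra}}+T_{\textrm{data}})}$, and the fraction is at most $1$ since $(L-\tfrac12)T_{\textrm{syn}}>0$ and $T_{\textrm{sib}}\geq T_{\textrm{syn}}$; hence $W T_{\textrm{data}} R_{\max}-\lambda^\star(T_{\textrm{ra}}+T_{\textrm{data}})\geq 0$, so the candidate genuinely satisfies (\ref{eq:011}). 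Because Propositions \ref{pro:1} and \ref{pro:2} already established that this fixed point equation has a unique solution, the candidate must be the maximum throughput $\lambda^\star$.

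Finally I would read off the stopping rule by substituting $\lambda^\star$ into (\ref{eq:012}). Using $\mathbb{I}(\snr_n \geq \Gamma)=1$ again, the optimal rule becomes $N^\star=\min\{n\geq 1: R_n\geq \lambda^\star(T_{\textrm{ra}}+T_{\textrm{data}})/(W T_{\textrm{data}})\}$, so it only remains to simplify the threshold. Computing $\lambda^\star(T_{\textrm{ra}}+T_{\textrm{data}})/(W T_{\textrm{data}})$ from (\ref{eq:25}), dividing both numerator and denominator by $q(T_{\textrm{ra}}+T_{\textrm{data}})$, and recognizing the definition (\ref{eq:24}) of $\phi$, the threshold equals $R_{\max}/(1+\phi)$, as claimed. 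I do not anticipate any real obstacle; the only point requiring a little care is the bookkeeping around the operator $(\cdot)^+$, and that is settled cleanly by the consistency check combined with the uniqueness of the fixed point.
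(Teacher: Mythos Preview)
Your argument is correct and is exactly the intended derivation: the paper states Corollary~\ref{cor:1} without proof, treating it as an immediate specialization of Proposition~\ref{pro:2}, and your steps (plugging the binary distribution into (\ref{eq:011}), solving the resulting linear equation, verifying the positive-part assumption via the uniqueness of the fixed point, and simplifying the threshold in (\ref{eq:012}) to $R_{\max}/(1+\phi)$) carry this out cleanly.
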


Several remarks on the results in Corollary \ref{cor:1} are in order.

\textbf{Remark 1.} The numerator $q W T_{\textrm{data}}  R_{\max}$ in (\ref{eq:25}) is the expected number of information bits that UE can receive in the downlink in a communication period. The denominator in (\ref{eq:25}) is the expected duration of a communication period including the expected synchronization time $(L - \frac{1}{2}) T_{\textrm{syn}}$, the expected time $\frac{T_{\textrm{sib}} - T_{\textrm{syn}}}{2}$ of extracting cell load information, the expected time $q T_{\textrm{ra}}$ of the random access procedure, and the expected time $q T_{\textrm{data}}$ of data communication. Therefore, with the optimal stopping strategy the UE makes the right decision in initial cell search and selection and achieves the optimal throughput given in (\ref{eq:25}).

\textbf{Remark 2.} The optimal stopping rule is a pure threshold policy: the UE stops searching and selects the currently scanned cell if $R_n/ R_{\max} \geq  \frac{1}{1+\phi} $. We can see that the  the threshold is determined by the ratio $\phi$  of the expected cell search time $((L - \frac{1}{2}) T_{\textrm{syn}}  + \frac{T_{\textrm{sib}} - T_{\textrm{syn}}}{2})/q$ and the expected time $T_{\textrm{ra}} + T_{\textrm{data}}$ used in the random access and data communication. Intuitively, the longer the data communication, the higher the threshold, i.e., the UE is more cautious in cell selection and is willing to search for more cells. In contrast, the longer the expected cell search time, the lower the threshold. This is because the overhead of searching for a cell becomes higher. As a result, the UE should decrease its threshold and stops earlier. 

\textbf{Remark 3.} The maximum throughput can be written as
\begin{align}
\lambda^\star  = \frac{1}{1+\phi} \cdot  \frac{T_{\textrm{data}} }{ T_{\textrm{ra}} + T_{\textrm{data}}  } W  R_{\max} ,
\end{align}
which shows the dependency of the maximum throughput on the ratio $\phi$ of the expected cell search time $((L - \frac{1}{2}) T_{\textrm{syn}}  + \frac{T_{\textrm{sib}} - T_{\textrm{syn}}}{2})/q$ and the expected time $T_{\textrm{ra}} + T_{\textrm{data}}$ used in the random access and data communication. Intuitively, the higher the ratio $\phi$, the lower the maximum throughput, due to the increased cell search overhead.

\textbf{Remark 4.} Note that the threshold of the optimal stopping (\ref{eq:012}) may not be unique, though the maximum throughput $\tilde{\lambda}^\star$ is unique. The binary selection metric taking either value $R_{\max}$ or $0$ in Corollary \ref{cor:1} is one such example. In particular, any value in $(0, R_{\max}]$ can be used as a threshold to achieve the maximum throughput.


\subsection{An Alternative Characterization of the Optimal Throughput}

Proposition \ref{pro:2} characterizes the optimal throughput via the fixed point equation (\ref{eq:011}). In this section, we derive an alternative characterization of the optimal throughput. The alternative characterization will pave the way for developing an iterative algorithm to compute the solution to the fixed point equation (\ref{eq:011}).

Denote by $\hat{R}$ the generic random variable for $\{\hat{R}_n\}$ defined in (\ref{eq:0111}) and by $F_{\hat{R}}(x)$ the cumulative distribution function of $\hat{R}$. The following Corollary \ref{cor:2} readily follows.
\begin{cor}
With the throughput optimal stopping in initial cell search and selection, the following results hold. 
\begin{enumerate}
\item The stopping time ${N}^\star$ is geometrically distributed with parameter $1-F_{\hat{R}}({\lambda}^\star (T_{\textrm{ra}} + T_{\textrm{data}}) )$.
\item The distribution of the stopped random variable $U_{{N}^\star}$ is given by 
\begin{align}
F_{U_{{N}^\star}} (x) = \frac{ F_{\hat{R}}(x) - F_{\hat{R}}({\lambda}^\star (T_{\textrm{ra}} + T_{\textrm{data}}) )  }{ 1-F_{\hat{R}}({\lambda}^\star (T_{\textrm{ra}} + T_{\textrm{data}})) }
\end{align}
with $x\geq {\lambda}^\star (T_{\textrm{ra}} + T_{\textrm{data}})$.
\end{enumerate}
\label{cor:2}
\end{cor}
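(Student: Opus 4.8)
The plan is to establish both claims directly from the characterization of $N^\star$ in Proposition \ref{pro:2}, namely that $N^\star = \min\{n \geq 1 : \hat{R}_n \geq \rho\}$ with $\rho = \lambda^\star(T_{\textrm{ra}} + T_{\textrm{data}})$, combined with the i.i.d.\ assumption on $\{R_i\}$ (hence on $\{\hat{R}_n\}$). For the first part, I would observe that $N^\star$ is exactly the first success time of a sequence of independent Bernoulli trials, where a ``success'' at trial $n$ is the event $\{\hat{R}_n \geq \rho\}$, which has probability $\mathbb{P}(\hat{R} \geq \rho) = 1 - F_{\hat{R}}(\rho^-)$. Up to the usual left-continuity caveat for the CDF, this probability equals $1 - F_{\hat{R}}(\lambda^\star(T_{\textrm{ra}} + T_{\textrm{data}}))$ (the paper works with $F_{\hat{R}}$ evaluated at the threshold and implicitly assumes no atom at $\rho$, or uses the right-continuous convention consistently), so $\mathbb{P}(N^\star = k) = F_{\hat{R}}(\rho)^{k-1}(1 - F_{\hat{R}}(\rho))$, i.e.\ the geometric distribution with the stated parameter. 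I would also note that finiteness of $N^\star$ a.s.\ (so that $\mathcal{C}$ is nonempty and $N^\star \in \mathcal{C}$) requires $1 - F_{\hat{R}}(\rho) > 0$, which holds because $\lambda^\star$ is finite and $\hat{R}$ has finite second moment.

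For the second part, I would compute the CDF of the stopped variable $U_{N^\star}$. The key fact is $U_{N^\star} = \max_{i \le N^\star}\hat{R}_i = \hat{R}_{N^\star}$: by the induction argument already given in the proof of Proposition \ref{pro:2}, every cell scanned before stopping has $\hat{R}_i < \rho$, so the running maximum is attained at the final stage. Hence for $x \geq \rho$,
\begin{align}
F_{U_{N^\star}}(x) &= \mathbb{P}(\hat{R}_{N^\star} \leq x) = \mathbb{P}(\hat{R}_1 \leq x \mid \hat{R}_1 \geq \rho) \notag \\
&= \frac{\mathbb{P}(\rho \leq \hat{R}_1 \leq x)}{\mathbb{P}(\hat{R}_1 \geq \rho)} = \frac{F_{\hat{R}}(x) - F_{\hat{R}}(\rho)}{1 - F_{\hat{R}}(\rho)},
\end{align}
where the second equality uses that, conditionally on the event $\{N^\star = k\}$, the variable $\hat{R}_k$ is distributed as $\hat{R}_1$ conditioned on $\{\hat{R}_1 \geq \rho\}$, and this conditional law does not depend on $k$ by the i.i.d.\ property and independence of $\{N^\star = k\}$ from $\hat{R}_k$'s value beyond the thresholding event. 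Substituting $\rho = \lambda^\star(T_{\textrm{ra}} + T_{\textrm{data}})$ gives the claimed formula.

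To make the conditioning rigorous I would write, for any Borel set $B \subseteq [\rho, \infty)$,
\begin{align}
\mathbb{P}(\hat{R}_{N^\star} \in B) &= \sum_{k=1}^\infty \mathbb{P}(\hat{R}_1 < \rho, \dots, \hat{R}_{k-1} < \rho, \hat{R}_k \in B) \notag \\
&= \sum_{k=1}^\infty F_{\hat{R}}(\rho)^{k-1}\, \mathbb{P}(\hat{R}_1 \in B),
\end{align}
and summing the geometric series yields $\mathbb{P}(\hat{R}_{N^\star} \in B) = \mathbb{P}(\hat{R}_1 \in B)/(1 - F_{\hat{R}}(\rho))$, from which the CDF formula follows by taking $B = [\rho, x]$. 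The main obstacle, such as it is, is purely a matter of bookkeeping with the atom of $\hat{R}$ at the threshold $\rho$ and the left/right-continuity convention for $F_{\hat{R}}$: one should either assume $\hat{R}$ has no mass at $\lambda^\star(T_{\textrm{ra}}+T_{\textrm{data}})$ or interpret the events $\{\hat{R}_n \geq \rho\}$ and the CDF values consistently. Everything else is an immediate consequence of the renewal/first-passage structure already exposed in Proposition \ref{pro:2}, so no new machinery is needed.
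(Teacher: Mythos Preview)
Your proposal is correct and follows exactly the same line as the paper's proof: invoke Proposition~\ref{pro:2} to see that $N^\star$ is the first success time of i.i.d.\ Bernoulli trials (hence geometric), then use $U_{N^\star}=\hat{R}_{N^\star}$ and identify its law as that of $\hat{R}$ conditioned on $\hat{R}\geq\lambda^\star(T_{\textrm{ra}}+T_{\textrm{data}})$. Your write-up is in fact more detailed than the paper's---the explicit decomposition $\sum_k \mathbb{P}(\hat{R}_1<\rho,\dots,\hat{R}_{k-1}<\rho,\hat{R}_k\in B)$ and the remark about atoms at the threshold are welcome additions---but there is no difference in approach.
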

\begin{proof}
From Proposition \ref{pro:2}, we know that UE stops searching and selects cell $n$ if $\hat{R}_n \geq {\lambda}^\star (T_{\textrm{ra}} + T_{\textrm{data}})$. It follows that the number ${N}^\star$ of searched cells   is geometrically distributed with parameter $1-F_{\hat{R}}({\lambda}^\star (T_{\textrm{ra}} + T_{\textrm{data}}) )$. At the throughput optimal stopping time $N^\star$, $U_{{N}^\star} = \hat{R}_{N^\star}$. Further, the distribution of $\hat{R}_{N^\star}$ is a conditional distribution that results from restricting the domain of the distribution $\hat{R}$ to $x\geq {\lambda}^\star (T_{\textrm{ra}} + T_{\textrm{data}})$. This completes the proof.
\end{proof}

With Corollary \ref{cor:2}, we are now in a position to derive the alternative characterization of the optimal throughput in the following Proposition \ref{pro:3}.
\begin{pro}
With the throughput optimal stopping in initial cell search and selection, the maximum throughput $\lambda^\star $ is given by
\begin{align}
\lambda^\star = \frac{ \int_{{\lambda}^\star (T_{\textrm{ra}} + T_{\textrm{data}}) }^\infty  x  \dint F_{\hat{R}} (x)  }{ \eta   + (T_{\textrm{ra}} + T_{\textrm{data}}) (1-F_{\hat{R}}({\lambda}^\star (T_{\textrm{ra}} + T_{\textrm{data}}) ) } ,
\label{eq:29}
\end{align}
where $\eta \triangleq \left(L - \frac{1}{2} \right) T_{\textrm{syn}}  +  \frac{1}{2}(T_{\textrm{sib}} - T_{\textrm{syn}})    \mathbb{P} ( \snr\geq \Gamma )  $.
\label{pro:3}
\end{pro}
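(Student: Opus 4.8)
The plan is to evaluate the ratio $\mathbb{E}[U_{N^\star}]/\mathbb{E}[T_{N^\star}]$, which equals $\lambda^\star$ because $N^\star$ attains the supremum in (\ref{eq:3}) (this is the throughput interpretation established in Section \ref{sec:model}), using the two items of Corollary \ref{cor:2} together with Lemma \ref{lem:1}. For brevity write $\rho \triangleq \lambda^\star(T_{\textrm{ra}}+T_{\textrm{data}})$ and $p \triangleq 1-F_{\hat{R}}(\rho)$.

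First I would handle the numerator. By Proposition \ref{pro:2} the UE stops at the first cell $n$ with $\hat{R}_n \geq \rho$, so $U_{N^\star} = \hat{R}_{N^\star}$; invoking Corollary \ref{cor:2}(2), the law of $U_{N^\star}$ is the conditional law of $\hat{R}$ given $\hat{R}\geq\rho$, whence $\mathbb{E}[U_{N^\star}] = p^{-1}\int_\rho^\infty x\,\dint F_{\hat{R}}(x)$. For the denominator I would return to the representation $T_n = \sum_{i=1}^n W_i + T_{\textrm{ra}}+T_{\textrm{data}}$ read off from (\ref{eq:01}), with $W_i \triangleq Y_i + (L-1)T_{\textrm{syn}} + Z_i\mathbb{I}(\snr_i\geq\Gamma)$, for which the proof of Lemma \ref{lem:1} already gives $\mathbb{E}[W_i]=\eta$. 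The key observation is that the event $\{N^\star\geq i\}$ is determined by $\hat{R}_1,\dots,\hat{R}_{i-1}$ alone and is therefore independent of $W_i$ — note that although $W_i$ involves $\snr_i$, which also enters $\hat{R}_i$, the stopping event does not involve $\hat{R}_i$ — so $\mathbb{E}[W_i\mathbb{I}(N^\star\geq i)] = \eta\,\mathbb{P}(N^\star\geq i)$. Since $N^\star\in\mathcal{C}$ has $\mathbb{E}[N^\star]<\infty$, Wald's identity then yields $\mathbb{E}[\sum_{i=1}^{N^\star}W_i] = \eta\,\mathbb{E}[N^\star]$, and Corollary \ref{cor:2}(1) gives $\mathbb{E}[N^\star]=1/p$ for the geometric stopping time, so $\mathbb{E}[T_{N^\star}] = \eta/p + T_{\textrm{ra}}+T_{\textrm{data}}$.

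Combining the two expressions and clearing the common factor $p$ gives
\[
\lambda^\star = \frac{p^{-1}\int_\rho^\infty x\,\dint F_{\hat{R}}(x)}{\eta/p + T_{\textrm{ra}}+T_{\textrm{data}}} = \frac{\int_\rho^\infty x\,\dint F_{\hat{R}}(x)}{\eta + p\,(T_{\textrm{ra}}+T_{\textrm{data}})},
\]
which is exactly (\ref{eq:29}). The only step that requires genuine care is the Wald computation of $\mathbb{E}[T_{N^\star}]$: one must verify the adaptedness/independence structure that makes the per-cell increment $W_i$ independent of $\{N^\star\geq i\}$ and confirm $\mathbb{E}[N^\star]<\infty$ so that the interchange of summation and expectation (equivalently, the renewal-reward/Wald identity) is legitimate; the remaining manipulations are routine. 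As a consistency check, (\ref{eq:29}) can also be obtained purely algebraically from the fixed point equation (\ref{eq:011}): writing $\mathbb{E}[(\hat{R}-\rho)^+] = \int_\rho^\infty x\,\dint F_{\hat{R}}(x) - \rho\,p$ and substituting $\rho = \lambda^\star(T_{\textrm{ra}}+T_{\textrm{data}})$ together with $\mathbb{E}[(\hat{R}-\rho)^+] = \lambda^\star\eta$ (which is (\ref{eq:011}) at $\lambda=\lambda^\star$) and rearranging recovers the same identity.
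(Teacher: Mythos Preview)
Your proof is correct and follows essentially the same route as the paper: both compute $\lambda^\star = \mathbb{E}[U_{N^\star}]/\mathbb{E}[T_{N^\star}]$ by evaluating the numerator via Corollary~\ref{cor:2}(2) and the denominator via Corollary~\ref{cor:2}(1) combined with Lemma~\ref{lem:1}. The paper simply writes $\mathbb{E}[T_{N^\star}] = \eta\,\mathbb{E}[N^\star] + T_{\textrm{ra}} + T_{\textrm{data}}$ by substituting $n\mapsto N^\star$ in Lemma~\ref{lem:1} without further comment, whereas you justify this step explicitly via Wald's identity (and add the algebraic cross-check against (\ref{eq:011})); this extra care is welcome but does not change the underlying argument.
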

\begin{proof}
By Lemma \ref{lem:1}, we have
\begin{align}
&\mathbb{E} [ T_{N^\star} ] =  \mathbb{E} [N^\star]  \left(L - \frac{1}{2} \right) T_{\textrm{syn}} \notag \\
& +  \frac{\mathbb{E} [N^\star] }{2}(T_{\textrm{sib}} - T_{\textrm{syn}})    \mathbb{P} ( \snr\geq \Gamma )    + T_{\textrm{ra}} + T_{\textrm{data}} .
\label{eq:26}
\end{align}
By the first result in Corollary \ref{cor:2}, we have
\begin{align}
& \mathbb{E} \left[  N^\star \right]  =  \frac{1}{ 1-F_{\hat{R}}({\lambda}^\star (T_{\textrm{ra}} + T_{\textrm{data}}) )  }  .
\label{eq:27}
\end{align} 
By the second result in Corollary \ref{cor:2}, we have
\begin{align}
&\mathbb{E}[U_{{N}^\star}] = \int_{{\lambda}^\star (T_{\textrm{ra}} + T_{\textrm{data}})  }^\infty  x  \dint F_{U_{{N}^\star}} (x)  \notag \\
&= \frac{ 1  }{ 1-F_{\hat{R}}({\lambda}^\star (T_{\textrm{ra}} + T_{\textrm{data}})) } \int_{{\lambda}^\star (T_{\textrm{ra}} + T_{\textrm{data}}) }^\infty  x  \dint F_{\hat{R}} (x)  .
\label{eq:28}
\end{align}
Plugging (\ref{eq:26}), (\ref{eq:27}), and (\ref{eq:28}) into ${\lambda}^\star = \frac{\mathbb{E}[U_{{N}^\star}]}{ \mathbb{E} [ T_{{N}^\star } ] } $ yields (\ref{eq:29}). 
\end{proof}

Proposition \ref{pro:3} provides an alternative fixed point equation (\ref{eq:29}) whose solution is the maximum throughput. It also suggests one possible numerical iterative algorithm to solve for $\lambda^\star$. Denote by $t$ the iteration index. Replacing the $\lambda^\star$ on the left hand side of (\ref{eq:29}) by $\lambda[t+1] $ and the $\lambda^\star$ on the right hand side of (\ref{eq:29}) by $\lambda[t]$ yields that
\begin{align}
\lambda[t+1] = \frac{ \int_{{\lambda}[t] (T_{\textrm{ra}} + T_{\textrm{data}}) }^\infty  x  \dint F_{\hat{R}} (x)  }{ \eta   + (T_{\textrm{ra}} + T_{\textrm{data}}) (1-F_{\hat{R}}({\lambda}[t] (T_{\textrm{ra}} + T_{\textrm{data}}) ) } .
\label{eq:30}
\end{align}
This iterative method is in essence a variation of Newton's method with all iterations using a unit step size. The following Proposition \ref{pro:4} formally establishes the convergence of the iterative equation (\ref{eq:30}).

\begin{pro}
For any initial value $\lambda [0] > 0$ , the sequence $\{\lambda[t]\}$ generated by the iterative equation (\ref{eq:30}) converges to the maximum throughput $\lambda^\star$. 
\label{pro:4}
\end{pro}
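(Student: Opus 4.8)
The plan is to recognize the iteration (\ref{eq:30}) as Newton's method applied to a convex, strictly decreasing scalar function whose unique root is $\lambda^\star$, and then invoke the classical global monotone convergence of Newton's method on convex functions.

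First I would set $T \triangleq T_{\textrm{ra}} + T_{\textrm{data}}$ and introduce the auxiliary function $h(\lambda) \triangleq \mathbb{E}\big[(\hat{R} - \lambda T)^+\big] - \lambda \eta$, i.e. the left-hand side minus the right-hand side of the fixed point equation (\ref{eq:011}); by Proposition \ref{pro:2}, $\lambda^\star$ is its unique root. Using the identity $\mathbb{E}[(\hat{R} - \lambda T)^+] = \int_{\lambda T}^\infty x \, \dint F_{\hat{R}}(x) - \lambda T(1 - F_{\hat{R}}(\lambda T)) = \int_{\lambda T}^\infty (1-F_{\hat{R}}(x))\,\dint x$, one rewrites the right-hand side of (\ref{eq:30}) as $\lambda + h(\lambda)/\big(\eta + T(1-F_{\hat{R}}(\lambda T))\big)$ evaluated at $\lambda = \lambda[t]$. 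Since the last integral has right-derivative $-T(1-F_{\hat{R}}(\lambda T))$ in $\lambda$, the function $h$ has right-derivative $h'_+(\lambda) = -\big(\eta + T(1 - F_{\hat{R}}(\lambda T))\big)$, so, writing $g$ for the map defined by the right-hand side of (\ref{eq:30}), the iteration reads $\lambda[t+1] = g(\lambda[t])$ with $g(\lambda) = \lambda - h(\lambda)/h'_+(\lambda)$ — a Newton step using the one-sided derivative.

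Next I would record the properties of $h$ that drive the argument: it is convex (for each realization $\lambda \mapsto (\hat{R} - \lambda T)^+$ is convex, expectation preserves convexity, and $\lambda\eta$ is linear), continuous, and strictly decreasing with $h'_+(\lambda) \le -\eta < 0$ for every $\lambda$ — recall $\eta \ge \frac{1}{2} T_{\textrm{syn}} > 0$ — while $h(0) = \mathbb{E}[\hat{R}] > 0$ in the nondegenerate case and $h(\lambda) \to -\infty$, so $h$ has the unique positive root $\lambda^\star$ (the degenerate case $\hat{R}\equiv 0$ has $\lambda^\star=0$ and $\lambda[t]=0$ for $t\ge1$, so is trivial). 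From convexity, the subgradient inequality $0 = h(\lambda^\star) \ge h(\lambda) + h'_+(\lambda)(\lambda^\star - \lambda)$ together with $h'_+(\lambda) < 0$ gives $g(\lambda) \le \lambda^\star$ for every $\lambda > 0$; and $g(\lambda) \ge \lambda$ iff $h(\lambda) \ge 0$, i.e. iff $\lambda \le \lambda^\star$. Finally $g(\lambda) \ge 0$ for all $\lambda > 0$, since the numerator in (\ref{eq:30}) is nonnegative and the denominator is at least $\eta > 0$; and $g(\lambda) = \lambda$ forces $h(\lambda) = 0$, hence $\lambda = \lambda^\star$, so $\lambda^\star$ is the only fixed point of $g$ in $(0,\infty)$.

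With these in hand the conclusion is routine. For any $\lambda[0] > 0$ we have $\lambda[1] = g(\lambda[0]) \in [0, \lambda^\star]$, and once an iterate lies in $[0,\lambda^\star]$ the inequalities $\lambda[t] \le g(\lambda[t]) \le \lambda^\star$ keep the next one there; hence $\{\lambda[t]\}_{t \ge 1}$ is nondecreasing and bounded above by $\lambda^\star$, so it converges to some $\ell \in [0,\lambda^\star]$. Passing to the limit in $\lambda[t+1] = \lambda[t] + h(\lambda[t])/\big(\eta + T(1-F_{\hat{R}}(\lambda[t]T))\big)$ — using continuity of $h$ and, because $\lambda[t] \uparrow \ell$, the monotone convergence $1 - F_{\hat{R}}(\lambda[t]T) \to \mathbb{P}(\hat{R} \ge \ell T)$ of the denominator to a value $\ge \eta > 0$ — yields $h(\ell) = 0$, whence $\ell = \lambda^\star$. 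The main obstacle is the nonsmoothness of $h$ at the atoms of $F_{\hat{R}}$ (present already in the binary case of Corollary \ref{cor:1}): one must work consistently with the right-derivative $h'_+$ and with subgradients rather than an ordinary derivative, and be careful in the limit step that the relevant left-limits of $F_{\hat{R}}$ appear; but none of this affects the sign facts used, since $h'_+ \le -\eta < 0$ everywhere.
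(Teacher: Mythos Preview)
Your argument is correct and has the same overall skeleton as the paper's: after one step the iterate lands in $[0,\lambda^\star]$, from there the sequence is nondecreasing and bounded above by $\lambda^\star$, and the limit must be the unique fixed point. The difference is in how the key bound $g(\lambda)\le\lambda^\star$ is justified. The paper uses an operational shortcut: since $g(\lambda)$ is exactly the throughput of the threshold rule with threshold $\lambda(T_{\textrm{ra}}+T_{\textrm{data}})$, and $\lambda^\star$ is the optimal throughput over all stopping rules, $g(\lambda)\le\lambda^\star$ is immediate. You instead make explicit the remark the paper only states in passing --- that (\ref{eq:30}) is a Newton step --- by exhibiting the residual $h(\lambda)=\mathbb{E}[(\hat R-\lambda T)^+]-\lambda\eta$ as convex with right-derivative bounded away from zero, and then pulling $g(\lambda)\le\lambda^\star$ out of the subgradient inequality. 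Your route is a bit more self-contained (it does not lean on Proposition~\ref{pro:2}'s optimality statement, only on the uniqueness of the root) and handles the nonsmooth case more carefully; the paper's route is shorter because the optimality of $\lambda^\star$ has already been established.
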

\begin{proof}
For notational simplicity, denote by
\begin{align}
h(z) = \frac{ \int_{z (T_{\textrm{ra}} + T_{\textrm{data}}) }^\infty  x  \dint F_{\hat{R}} (x)  }{ \eta   + (T_{\textrm{ra}} + T_{\textrm{data}}) (1-F_{\hat{R}}(z (T_{\textrm{ra}} + T_{\textrm{data}}) ) }.
\end{align}
By definition, $h(0) > 0$, ${\lambda}^\star = \max_{\lambda} h(\lambda)$, and $\lambda^\star$ is the unique maximum value. It follows that  $\lambda \leq h(\lambda)$ if $\lambda \leq {\lambda}^\star$, and $\lambda > h(\lambda)$ if $\lambda > {\lambda}^\star$. For any initial value $\lambda[0] > 0$, if $\lambda[0] > {\lambda}^\star$, $\lambda[0] > h(\lambda[0]) = \lambda[1]$. Since $\lambda[1] \leq {\lambda}^\star$, we may assume without loss of generality that $\lambda[0] \leq {\lambda}^\star$. Further, we have $\lambda[0] \leq   h( \lambda[0]  ) = \lambda[ 1 ]$ and $h( \lambda[0]  ) \leq h( {\lambda}^\star  ) = {\lambda}^\star$. It follows that $\lambda[0] \leq    \lambda[ 1 ] \leq {\lambda}^\star$. By induction,  it can be seen that $\{\lambda[t]\}$ is monotonically non-decreasing and is bounded by ${\lambda}^\star$.  Therefore, $\{\lambda[t]\}$ converges to some limiting point $\lambda_{\infty}$. In particular,
\begin{align}
& 0 = \lim_{t \to \infty} ( \lambda [t + 1] - \lambda[t] ) \notag \\
&= \lim_{t \to \infty} \left( \frac{ \int_{{\lambda}[t] (T_{\textrm{ra}} + T_{\textrm{data}}) }^\infty  x  \dint F_{\hat{R}} (x)  }{ \eta   + (T_{\textrm{ra}} + T_{\textrm{data}}) (1-F_{\hat{R}}({\lambda}[t] (T_{\textrm{ra}} + T_{\textrm{data}}) ) }  - \lambda[t] \right) \notag \\
&= \frac{ \int_{{\lambda}_{\infty} (T_{\textrm{ra}} + T_{\textrm{data}}) }^\infty  x  \dint F_{\hat{R}} (x)  }{ \eta   + (T_{\textrm{ra}} + T_{\textrm{data}}) (1-F_{\hat{R}}({\lambda}_{\infty} (T_{\textrm{ra}} + T_{\textrm{data}}) ) } - \lambda_{\infty} .
\end{align}
It follows that 
\begin{align}
\lambda_{\infty} = \frac{ \int_{{\lambda}_{\infty} (T_{\textrm{ra}} + T_{\textrm{data}}) }^\infty  x  \dint F_{\hat{R}} (x)  }{ \eta   + (T_{\textrm{ra}} + T_{\textrm{data}}) (1-F_{\hat{R}}({\lambda}_{\infty} (T_{\textrm{ra}} + T_{\textrm{data}}) ) } .
\end{align}
In other words, $\lambda_{\infty}$ is also the solution to the fixed point equation (\ref{eq:5}) whose solution is the maximum throughput $\lambda^\star$. Since the solution is unique, we must have $\lambda_{\infty} = \lambda^\star$.
\end{proof}

\section{Simulation Results}
\label{sec:sim}

In this section, we provide simulation results to demonstrate the analytical results and obtain insights into how the various system parameters affect the throughput optimal initial cell search and selection. In the simulation, the load value broadcast by each cell $i$ is $\beta_i = \frac{1}{M_i + 1}$, where recall $M_i$ is the number of active UEs served by BS $i$. The received SNR from BS $i$ equals $\snr_i = g \cdot L \cdot \snr_{\textrm{avg}}$, where $g$ models log-normal shadowing and $L$ models the beamforming gain. The specific parameters used are summarized in Table \ref{tab:sys:para} unless otherwise specified. 

\begin{table}
\centering
\begin{tabular}{|l||r|} \hline
Mean number of active UEs per cell: $M_i$ & $10$  \\ \hline 
Standard deviation of log-normal shadowing & $7$ dB  \\ \hline 
Number of beamforming pairs: $L$ & $64$ \\ \hline 
Average received SNR: $\snr_{\textrm{avg}}$ & $-10$ dB  \\ \hline 
SNR threshold: $\Gamma$ & $-10$ dB  \\ \hline 
Synchronization signals period: $T_{\textrm{syn}}$ & $0.005$ $s$  \\ \hline 
System information reading period: $T_{\textrm{sib}}$ & $0.01$ $s$  \\ \hline 
Random access time: $T_{\textrm{ra}}$ & $0.02$ $s$  \\ \hline 
Data communication time: $T_{\textrm{data}}$ & $10$ $s$  \\ \hline 
Channel bandwidth: $W$ & $1$ GHz  \\ \hline 
\end{tabular}
\caption{Simulation Parameters}
\label{tab:sys:para}
\end{table}

Figure \ref{fig:2} shows a sample trace of initial cell search and selection. As the number $n$ of searched cells increases, the cell search time $T_n$ increases. Note that the cell search time $T_n$ is not a simple linear function of the number $n$ of searched cells (though visually a linear relationship is shown in Figure \ref{fig:2}). In particular, the cell search of each cell may consist of two parts: synchronization and cell load information reading. If the received SNR is below the threshold $\Gamma$, UE does not proceed with reading the load information after synchronization. The exact relationship is given in (\ref{eq:01}) excluding the last two terms $T_{\textrm{ra}}$ and $T_{\textrm{data}}$.  Figure \ref{fig:2} also shows that the amount of downlink bits that UE may expect to receive is a non-decreasing function of the number $n$ of searched cells, which is intuitive. To sum up, Figure \ref{fig:2} illustrates the tradeoff in searching for more cells in initial cell search and selection: Scanning more cells increases the probability of finding a better cell but at the cost of more overhead time spent on cell search.

 \begin{figure}
 \centering
  \includegraphics[width=9cm]{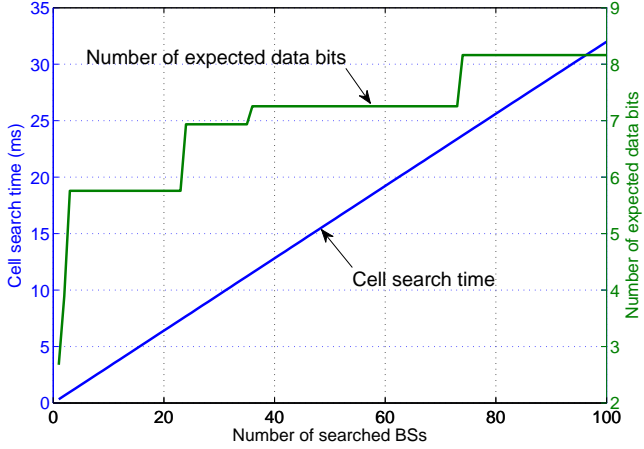}
 \caption[]{A sample trace of initial cell search and selection.}
\label{fig:2}
\end{figure}

In Figure \ref{fig:3}, we study how the throughput performance varies with the stopping selection metric threshold $R_i$ under different numbers $L$ of beamforming pairs. The data communication time is $T_{\textrm{data}}=10$ $s$, and the mean number of active UEs per cell equals $10$. For each number of beamforming pairs, Figure \ref{fig:3} clearly shows that there exists an optimal stopping threshold that achieves the maximum throughput. The maximum throughput increases when the number of beamforming pairs increases from $4$ to $16$ and to $64$, but it decreases when the number of beamforming pairs increases from $64$ to $256$. This is because there is a tradeoff when increasing the number of beamforming pairs.  Increasing the number of beamforming pairs increases the beamforming gain and in turn improves the received SNR. But increasing the number of beamforming pairs also increases the synchronization time spent on cell search since more beamforming pairs need to be scanned.

From Figure \ref{fig:3}, we can see that the optimal stopping selection metric threshold increases noticeably when the number of beamforming pairs increases from $4$ to $16$. So does the maximum throughput. This suggests that in this regime increasing beamforming gain is quite instrumental and much outweighs the cost of more time spent in cell search. In particular, UE can afford to search for more cells before camping on a cell and thus can set a higher stopping selection metric threshold. In contrast, when the number of beamforming pairs increases from $16$ to $64$ and to $256$, the optimal stopping selection metric threshold stays almost invariant and the maximum throughput does not change much. This suggests that in this regime the benefit from increasing the beamforming gain becomes saturated and is also offset by the increased overhead time in cell search.

 \begin{figure}
 \centering
  \includegraphics[width=9cm]{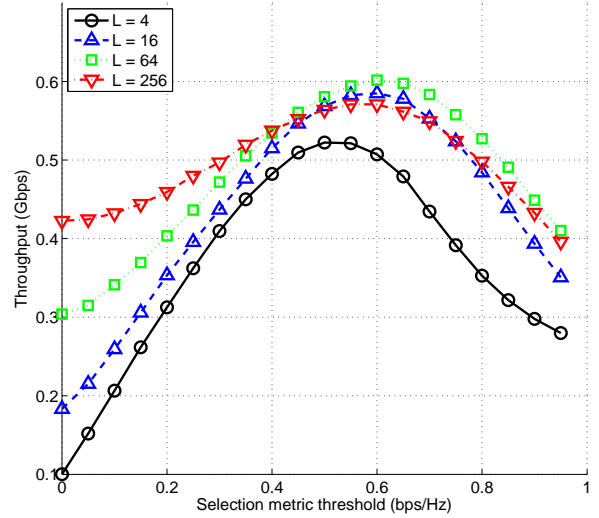}
 \caption[]{Ergodic throughput versus selection metric threshold under different numbers $L$ of beamforming pairs: $T_{\textrm{data}}=10$ $s$; mean number of active UEs per cell equals $10$.}
\label{fig:3}
\end{figure}

In Figure \ref{fig:4}, the setup is the same as in Figure \ref{fig:3} except that the data communication time is increased by $4$ times to $40$ $s$. Comparing Figure \ref{fig:4} to Figure \ref{fig:3}, we can see that for each number $L$ of beamforming pairs, the maximum throughput with $T_{\textrm{data}}=40$ $s$ in Figure \ref{fig:4} is larger than its corresponding part with $T_{\textrm{data}}=10$ $s$ in Figure \ref{fig:3}. This agrees with intuition: As the data communication time increases, the relative time overhead of cell search and random access becomes smaller, resulting in higher throughput. Further, for each number $L$ of beamforming pairs, the optimal selection metric threshold with $T_{\textrm{data}}=40$ $s$ in Figure \ref{fig:4} is larger than its counterpart with $T_{\textrm{data}}=10$ $s$ in Figure \ref{fig:3}. This is because the relative time overhead of cell search becomes smaller as the data communication time increases. As a result, UE can set a higher stopping selection metric threshold to search for more cells before camping on a cell.

 \begin{figure}
 \centering
  \includegraphics[width=9cm]{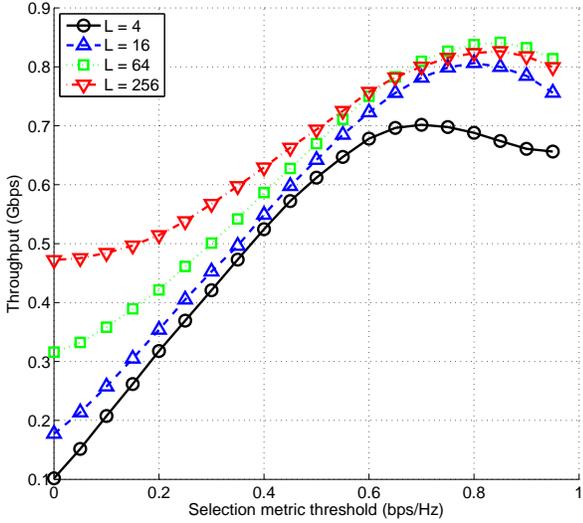}
 \caption[]{Ergodic throughput versus selection metric threshold under different numbers $L$ of beamforming pairs: $T_{\textrm{data}}=40$ $s$; mean number of active UEs per cell equals $10$.}
\label{fig:4}
\end{figure}

In Figure \ref{fig:5}, the setup is the same as in Figure \ref{fig:3} except that the mean number of active UEs per cell is decreased by $2$ times to $5$. Since $\beta_i = \frac{1}{M_i + 1}$, the expected scheduling probability of each cell $i$ is statistically larger in Figure \ref{fig:5}  than in Figure \ref{fig:3}. In other words, the load in Figure \ref{fig:5}  is statistically lighter than in Figure \ref{fig:3}. Comparing Figure \ref{fig:5} to Figure \ref{fig:3}, we can see that the  throughput values in Figure \ref{fig:5} are about twice as large as their counterparts in Figure \ref{fig:3}, agreeing with intuition. Accordingly,  for each number $L$ of beamforming pairs, the optimal selection metric threshold in Figure \ref{fig:4} is larger than its counterpart in Figure \ref{fig:3}. One interesting observation is that in Figure \ref{fig:5} the maximum throughput with $L=16$ is larger than the maximum throughput with $L=64$. The converse is true in Figure \ref{fig:3}. This suggests beamforming gain is more instrumental in improving the throughput performance when the load is heavier. When the load is light,  the throughput values are high since UE has access to more radio resources, and thus the beamforming gain becomes less important.

 \begin{figure}
 \centering
  \includegraphics[width=9cm]{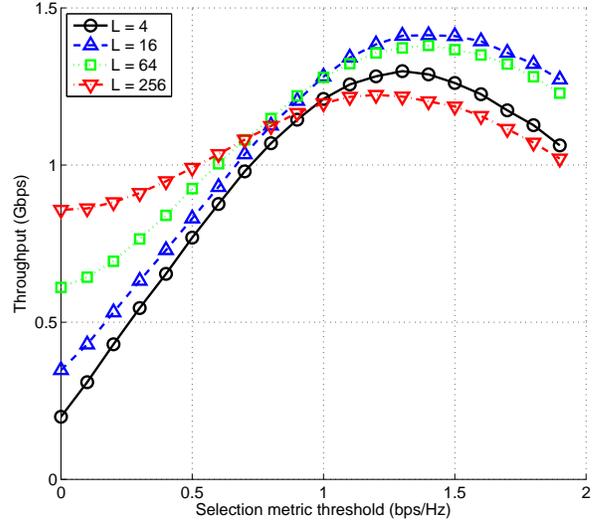}
 \caption[]{Ergodic throughput versus selection metric threshold under different numbers $L$ of beamforming pairs: $T_{\textrm{data}}=10$ $s$; mean number of active UEs per cell equals $5$.}
\label{fig:5}
\end{figure}

In Figure \ref{fig:3}, Figure \ref{fig:4}, or Figure \ref{fig:5}, it can be observed that the throughput performance becomes less sensitive when the numbers $L$ of beamforming pairs increases. This is because as the number $L$ of beamforming pairs increases, the radio channels become ``hardened'' and the relative variation of the received SNR reduces. As a result, less opportunism may be exploited by choosing an optimized selection threshold.

In Figure \ref{fig:6}, we compare the throughput performance attained by the proposed cell search and selection scheme with optimal stopping to the performance of several other cell search and selection strategies. The first scheme is the max-received-power association, where UE scans a number of cells and then selects the one that yields the maximum received power. The second scheme is to scan a \textit{fixed} number of cells and then selects the one that yields the largest selection metric. For either the max-received-power association or the max-selection-metric association,  we consider two values for the number of searched BSs in Figure \ref{fig:6}: $10$ and $30$. Figure \ref{fig:6} shows that the max-received-power association has the worst performance since it only takes into account the received power but ignores the important load factor. The max-selection-metric association takes into account both the received power and the load, but its stopping strategy of searching a fixed number of BSs is suboptimal. The throughput performance is optimized when the association is based on the proposed selection metric combined with the derived optimal stopping rule.

 \begin{figure}
 \centering
  \includegraphics[width=9cm]{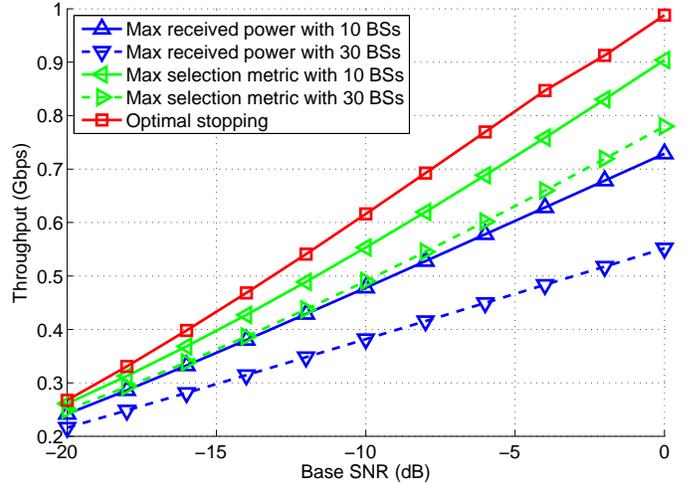}
 \caption[]{Comparison of throughput performance under different cell search and selection strategies.}
\label{fig:6}
\end{figure}

\section{Conclusions}
\label{sec:conclusion}

In this paper, we have studied initial cell search and selection in 5G networks. We propose a load-aware initial cell search and selection scheme, which is simple yet powerful. It equips the networks with a powerful access control method that facilitates load balancing. We also formulate a throughput optimization problem using the optimal stopping theory. We characterize the throughput optimal stopping strategy and the attained maximum throughput. The results show that the proposed initial cell search and selection scheme is throughput optimal with a carefully optimized connection threshold. 

This work can be extended in a number of ways. The selection metric investigated in this paper is a function of SNR. One may consider extending this metric to incorporate the effect of interference. Collisions in random access that is part of initial access procedure are not considered in this paper. It will be of interest to explore the impact of collisions on the performance.

%

\bibliographystyle{IEEEtran}
\bibliography{IEEEabrv,Reference}

\end{document}